\documentclass[twocolumn]{autart}    

\usepackage{amsfonts,amsmath,amssymb,mathrsfs}
\usepackage{epsfig,epstopdf,graphicx,graphics}
\usepackage{color}
\usepackage{float}
\usepackage{ifpdf}
\usepackage[colorlinks=true]{hyperref}
\usepackage{braket}
\usepackage{cite}
\usepackage{bbm}

\newcommand{\beq}{\begin{equation}}
	\newcommand{\eeq}{\end{equation}}
\newcommand{\beqm}{\begin{equation*}}
	\newcommand{\eeqm}{\end{equation*}}
\newcommand{\beqn}{\begin{eqnarray}}
	\newcommand{\eeqn}{\end{eqnarray}}
\newcommand{\beqnm}{\begin{eqnarray*}}
	\newcommand{\eeqnm}{\end{eqnarray*}}
\newcommand{\bea}{\begin{align}}
	\newcommand{\eea}{\end{align}}
\newcommand{\beam}{\begin{align*}}
	\newcommand{\eeam}{\end{align*}}
\newcommand{\bs}{\begin{subequations}}
	\newcommand{\es}{\end{subequations}}

\newcommand{\bei}{\begin{itemize}}
	\newcommand{\eei}{\end{itemize}}
\newcommand{\bed}{\begin{description}}
	\newcommand{\eed}{\end{description}}
\newcommand{\bee}{\begin{enumerate}}
	\newcommand{\eee}{\end{enumerate}}
\newcommand{\bey}{\begin{array}}
	\newcommand{\eey}{\end{array}}
\newcommand{\bef}{\begin{figure}}
	\newcommand{\eef}{\end{figure}}

\newcommand{\mbf}{\mathbf}

\def\l{\langle}
\def\r{\rangle}
\def\ff{\frac}

\newcommand{\la}{\label}

\renewcommand{\hbar}{\mathchar'26\mkern-9mu h}

\begin{document}

\begin{frontmatter}

	\title{Coherent feedback $H^\infty$ control of  quantum linear systems\thanksref{footnoteinfo}} 
	
	\thanks[footnoteinfo]{This paper was not presented at any IFAC meeting.  This work is partially financially supported by Quantum Science and Technology-National Science and Technology Major Project 2023ZD0300600, Guangdong Provincial Quantum Science Strategic Initiative No. GDZX2303007, Hong Kong Research Grant Council (RGC) under Grant No. 15213924, and  the CAS AMSS-PolyU Joint Laboratory of Applied Mathematics.
	}
	
	\author[polyu,polyuc]{Guofeng Zhang}\ead{guofeng.zhang@polyu.edu.hk},    
	\author[anu]{Ian~R.~Petersen}\ead{i.r.petersen@gmail.com}             
	
	\address[polyu]{Department of Applied Mathematics, The Hong Kong Polytechnic University, Hung Hom, Kowloon, Hong Kong SAR, China}
	\address[polyuc]{Research Institute for Quantum Technology, The Hong Kong Polytechnic University, Hong Kong SAR, China}      
	\address[anu]{School of Engineering, The Australian National University, Canberra ACT 2601, Australia}

	\begin{keyword}                          
		Quantum linear control systems; $H^\infty$ control; Lyapunov equations; algebraic Riccati equations.
	\end{keyword}

	\begin{abstract}                         
	The purpose of this paper is to investigate the coherent feedback $H^\infty$ control problem for linear quantum systems. A key contribution is a simplified design methodology that guarantees closed-loop stability and a prescribed level of disturbance attenuation. It is shown that for general linear quantum systems, a physically realizable quantum controller can be obtained by solving at most four Lyapunov equations. In the passive case, a necessary and sufficient condition is provided in terms of two uncoupled pairs of Lyapunov equations. These results represent a significant simplification over the standard approach, which requires solving two coupled algebraic Riccati equations. The effectiveness of the proposed method is demonstrated through two typical quantum optical devices:  an empty optical cavity and a degenerate parametric amplifier.  These results provide a computationally efficient procedure for the robust and optimal control of quantum optical and optomechanical systems.
	\end{abstract}
	
\end{frontmatter}

\section{Introduction}

In recent decades, substantial progress has been made in both the theoretical foundations and experimental realizations of control of quantum-mechanical systems. Quantum control plays a pivotal role in enabling key quantum technologies, including quantum communication, computation, cryptography, ultra-precision metrology, and nano-electronics. Within quantum control theory, quantum linear systems theory holds particular importance — much like linear systems do in classical control theory. Quantum linear systems serve as mathematical models for quantum harmonic oscillators, where the term ``linear” refers specifically to the linearity of the Heisenberg equations of motion for the system’s operators. This inherent linear structure considerably simplifies system analysis and controller synthesis, allowing powerful techniques from classical linear systems theory to be adapted effectively. As a result, quantum linear systems provide versatile and tractable models for a broad range of physical setups. Notable examples include quantum optical systems, circuit quantum electrodynamics (circuit QED), cavity QED systems, quantum optomechanical systems, atomic ensembles, and quantum memories  \cite{BK95,GZ00,WM08,WM10,NY17,DP23}.

To  the best of our  knowledge, coherent feedback $H^\infty$ control for linear quantum systems was first investigated in Ref. \cite{YK03b}, where simple examples  demonstrated how to use $H^\infty$ control to construct coherent feedback networks for achieving quantum squeezing. A general $H^\infty$ control framework for quantum linear systems was established in Ref. \cite{JNP08}. There, the authors derived a quantum version of the strict bounded real lemma as an extension of the classical one in Ref. \cite{PAJ91} for the study of the  nonsingular $H^\infty$ control problem and, based on it, developed a two-algebraic-Riccati-equation (ARE) solution for $H^\infty$ synthesis. A key contribution was the treatment of physical realizability — imposing quantum constraints such as the preservation of commutation relations.  It is fair to say that Ref. \cite{JNP08} underpins most subsequent studies on coherent feedback $H^\infty$ control of linear quantum systems. In Ref. \cite{MP11b}, coherent feedback $H^\infty$ control was studied for passive quantum linear systems whose dynamics are governed solely by annihilation operators, and the solution was given in terms of two coupled complex-domain AREs. The general (i.e., non-passive) case was later developed using the annihilation–creation operator representation in Ref. \cite{ZJ11}. The problem of time-varying coherent feedback $H^\infty$ control via a dynamic game approach was investigated in Ref. \cite{MP12}, with solutions expressed via two generalized Riccati differential equations and a spectral radius condition. Ref.~\cite{XPD17} addressed the robust   $H^\infty$ control of linear quantum systems by allowing uncertainties in the  system Hamiltonian, where the quantum controller is given in terms of two AREs. Time-varying quantum linear systems were further studied in Ref. \cite{LDP+22}, where coherent feedback  $H^\infty$ controllers were designed to achieve a prescribed disturbance attenuation level. Fluctuations in the amplitude and/or phase of the pump field of an optical parametric oscillator (OPO) can degrade its performance. Based on coherent feedback $H^\infty$ control, Ref.~\cite{LDP+22b} designs both passive and active quantum controllers to suppress such system fluctuations,  while simultaneously attaining the desired disturbance attenuation;  again, the main result is based on the solution of two coupled AREs. An equalization problem for passive linear quantum systems is studied in Refs. \cite{UJ24,XLDPU26} in the framework of coherent feedback $H^\infty$ control. Further research  can be found in Refs.~\cite{Mabuchi08, NJD09, ZLH+12, PDP17, CDZL17}, and the references therein.


In the study of coherent feedback control,  a  controller is designed to achieve the pre-specified control performance; in other words, the controller is constructed  by means of control theory. However, it may not be possible to physically realize such a controller  quantum-mechanically. This is the  problem of \emph{physical realizability of quantum-mechanical systems}.  Singular perturbation methods were  first used to demonstrate how to construct a quantum-mechanical controller in Ref. \cite{YK03b}.  A systematic procedure of constructing a quantum-mechanical controller from an $H^\infty$ controller is developed in Ref. \cite{JNP08}. Ref.  \cite{VP16} studies how to implement classical linear time-invariant (LTI) systems as physically realizable quantum systems by adding quantum noises. An algorithm is constructed to determine the minimal number of required auxiliary noise channels and presents a suboptimal coherent quantum LQG control algorithm.  An algorithm for physically implementable  passive quantum linear controllers is constructed in   Ref. \cite{UJ24}.  On the basis of physical realizability theory, Refs. \cite{nurdin2010synthesis,Nurdin2010b,petersen11}  studied how to build quantum controllers using quantum optical devices Refs. \cite{leonhardt2003,BR04}. More detail can be found in the book \cite{NY17}.

As mentioned above, most existing solutions to the coherent feedback $H^\infty$ control of quantum linear systems rely on the solution of two couples AREs, under certain assumptions; see for example  Assumption 5.2 in Ref. \cite{JNP08}, which are  identical to Assumptions \textbf{A1}-\textbf{A4} in Ref. \cite{PAJ91}.  In  Section \ref{sec:real}, it will be shown that for quantum linear systems,  Assumption \textbf{A1} naturally holds. Moreover, for the disturbance feedforward scenario  of $H^\infty$ control (see for example \cite[Section 16.6]{ZDG96}), Assumption \textbf{A2} naturally holds and Assumptions \textbf{A3} and \textbf{A4} are equivalent.
%
%
%
%
%
%
%
%
The main contribution of this paper is a significant simplification of coherent feedback \(H^{\infty}\) control design for quantum linear systems. The central result, Theorem \ref{thm_general}, shows that for a general linear quantum plant, a physically realizable linear quantum controller can be constructed by solving at most four Lyapunov equations. This replaces the conventional requirement to solve two coupled AREs in all the existing literature. Corollary \ref{cor:iff} provides a further simplification when the system $A$-matrix  is symmetric. For the important subclass of \emph{passive} quantum linear systems, Theorem \ref{thm:passive} establishes a simplified necessary and sufficient condition based on solving two pairs of Lyapunov equations. The practical application of the general theory is demonstrated through two detailed examples: an empty optical cavity and a degenerate parametric amplifier (DPA).

The remainder of this paper is organized as follows. Section \ref{sec:QLS} provides the preliminary background on quantum linear systems, presenting both the annihilation-creation and quadrature operator representations, along with their dynamical models and control-theoretic properties. In Section \ref{sec:real}, the coherent feedback \(H^{\infty}\) control problem for general quantum linear systems is formulated and investigated. The passive case is studied in Section \ref{sec:passive}. Applications of the general theory are then presented, with the $H^{\infty}$ control of an empty cavity examined in Section \ref{sec:cavity} and that of a DPA studied in Section \ref{sec:dpa}. Finally, Section \ref{sec:conclu} offers concluding remarks.

\textit{Notation}.
Let \(\imath = \sqrt{-1}\) denote the imaginary unit.  
Let $\mathbbm{R}$ be the field of real numbers, $\mathbbm{C}$ the field of complex numbers, and \(\mathbbm{Z}^+\) the set of positive integers.  
 For a column vector \(X = [x_1, \dots, x_n]^\top\) whose entries are complex numbers or operators, let \(X^\# = [x_1^\ast, \dots, x_n^\ast]^\top\) denote its complex conjugate or adjoint operator.  
Define \(X^\dagger = (X^\#)^\top\) and \(\breve{X} = [X^\top \; X^\dagger]^\top\).  
For two matrices \(U, V \in \mathbbm{C}^{k \times r}\), define the doubled‑up matrix   $\Delta(U,V)=\left[\begin{smallmatrix}
		U & V \\
		V^\# & U^\#
	\end{smallmatrix}\right]$.
 Denote by \(I_k\) the \(k \times k\) identity matrix.   
Let $J_k={\rm diag}\{I_k,-I_k\}$ and  $\mathbbm{J}_k=\bigl[
\begin{smallmatrix}
	0_{k} & I_k \\ 
	-I_k & 0_{k}%
\end{smallmatrix}
\big]$.
For \(X \in \mathbbm{C}^{2k \times 2r}\) define its \(\flat\)-adjoint and \(\sharp\)-adjoint respectively as  
$
X^\flat = J_r X^\dagger J_k$ and 
$X^\sharp = \mathbbm{J}_r^\top X^\dagger \mathbbm{J}_k$. 
The Kronecker delta is denoted by \(\delta_{jk}\), the Dirac delta function by \(\delta(t-r)\), and the tensor product by \(\otimes\).  
For operators \(a, b\), their commutator is \([a,b] = ab - ba\).  
Finally, the reduced Planck constant $\hbar$ is set to 1 in this paper.

%
%
%
%
%

\section{Quantum linear systems} \label{sec:QLS}

A quantum linear system $G$ models a collection of $n$ quantum harmonic oscillators driven by $m$ input Bosonic  fields. The $j$-th harmonic oscillator, $j = 1, \ldots, n$, is represented by its annihilation operator $\mbf{a}_{j}$ and creation operator $\mbf{a}_{j}^{\ast}$, which is the Hilbert space adjoint of $\mbf{a}_{j}$. These operators satisfy the canonical commutation relations (CCRs)
\beq \label{eq:CCR_a_2}
[\mbf{a}_{j}(t), \mbf{a}_{k}(t)] = 0, ~ [\mbf{a}_{j}^{\ast}(t), \mbf{a}_{k}^{\ast}(t)] = 0, ~ [\mbf{a}_{j}(t), \mbf{a}_{k}^{\ast}(t)] = \delta_{jk}.
\eeq
Let $\mbf{a} = [\mbf{a}_{1}~\cdots~\mbf{a}_{n}]^{\top}$. We can describe the quantum linear system in the $(S,\mbf{L},\mbf{H})$ formalism \cite{GJ09, GJ09b, ZJ12,ZLW+17,ZD22}. Here, the scattering operator $S\in\mathbbm{C}^{m\times m}$ is a unitary  matrix,  the system Hamiltonian is $\mbf{H} = \tfrac{1}{2}\mbf{\breve{a}}^{\dag}\Omega \mbf{\breve{a}}$, where $\Omega = \Delta(\Omega_{-}, \Omega_{+}) \in \mathbbm{C}^{2n \times 2n}$ is Hermitian with $\Omega_{-}, \Omega_{+} \in \mathbbm{C}^{n \times n}$, and the interface  between the system and the external fields is described by the operator $\mbf{L} = [C_{-}\; C_{+}]\,\mbf{\breve{a}}$, with $C_{-}, C_{+} \in \mathbbm{C}^{m \times n}$. The $k$-th input field, $k = 1, \ldots, m$, is modeled by its annihilation operator $\mbf{b}_{{\rm in}, k}(t)$ and creation operator $\mbf{b}_{{\rm in}, k}^{\ast}(t)$ (the Hilbert space adjoint of $\mbf{b}_{{\rm in},k}(t)$). These operators satisfy the singular commutation relations
\beq  \la{eq:CCR_b}
\begin{aligned}
& [\mbf{b}_{{\rm in}, j}(t), \mbf{b}_{{\rm in}, k}(r)] = 0, ~ [\mbf{b}_{{\rm in}, j}^{\ast}(t), \mbf{b}_{{\rm in}, k}^{\ast}(r)] = 0,  
\\
& [\mbf{b}_{{\rm in}, j}(t), \mbf{b}_{{\rm in}, k}^{\ast}(r)] = \delta_{jk}\delta(t-r).
\end{aligned}
\eeq
Let  $\mbf{b}_{\rm in}(t) = [\mbf{b}_{{\rm in}, 1}(t)~\cdots~\mbf{b}_{{\rm in}, m}(t)]^{\top}$. The Heisenberg equations of motion of the quantum linear system $G$  are governed by the It\^{o} quantum stochastic differential equations (QSDEs)
\beq\label{eq:sys}
\begin{aligned}
	\mbf{\dot{\breve{a}}}(t) 
	=&\; 
	\mathcal{A}\,\mbf{\breve{a}}(t) + \mathcal{B}\,\mbf{\breve{b}}_{\rm in}(t),  
	\\
	\mbf{\breve{b}}_{\mathrm{out}}(t) 
	=&\; 
	\mathcal{C}\,\mbf{\breve{a}}(t) + \mathcal{D}\,\mbf{\breve{b}}_{\rm in}(t), 
	\quad t \ge 0,
\end{aligned}
\eeq
with system matrices
\beq\label{ABCD0}
\begin{aligned}
	&\mathcal{D} = \Delta(S, 0),  \quad
	\mathcal{C} = \Delta(C_{-}, C_{+}), 
	\\
	& \mathcal{B} = -\mathcal{C}^{\flat}\mathcal{D},  \quad
	\mathcal{A} = -\imath J_{n}\Omega - \tfrac{1}{2}\mathcal{C}^{\flat}\mathcal{C}.
\end{aligned}
\eeq
The above system matrices satisfy physical realizability conditions  \cite{JNP08,NJP09,ZJ13}.
\begin{equation}\ \label{eq:PR_a_b} 
\mathcal{A}+\mathcal{A}^{\flat }+\mathcal{B}\mathcal{B}^{\flat } =0, \ \  
\mathcal{B} =-\mathcal{C}^{\flat }\mathcal{D}. 
\end{equation} 
The transfer matrix from $\breve{\mathbf{b}}_{\rm in}[s]$ to  $\breve{\mathbf{b}}_{\rm out}[s]$ is, \cite[Eq. (22)]{ZJ13},
\beq\label{eq:tf_G}
\Xi_G[s] = \mathcal{C} (sI-\mathcal{A})^{-1} \mathcal{B} + \mathcal{D} = \Delta(\Xi_{G_-}[s],\Xi_{G_+}[s] ),
\eeq
which satisfies $\Xi_G[-s^\ast]^\flat  \Xi_G[s]=  I_{2m},~~ \forall s \in \mathbbm{C}$.

The above discussions are in the Heisenberg picture, where the  joint system-field state remains constant.  Taking the quantum expectation on both sides of Eq. \eqref{eq:sys} with respect to the joint system-field initial state (\cite[Section 6.4.1]{WM10}, \cite[Section 2.6]{NY17}, \cite{NJP09}, \cite[Section 1]{BQD24}), yields a \emph{classical} linear system for mean dynamics \beq\la{eq:real_sys_mean}
\begin{aligned}
	\frac{d\l \breve{\mbf{a}}(t) \r}{dt} =&\; \mathcal{A} \l\breve{\mbf{a}}(t)\r + \mathcal{B} \l \breve{\mbf{b}}_{\rm in}(t)\r,
	\\
	\l\breve{\mbf{b}}_{\rm out}(t)\r =&\;  \mathcal{C} \l\breve{\mbf{a}}(t) \r+ \mathcal{D} \l \breve{\mbf{b}}_{\rm in}(t)\r.
\end{aligned}
\eeq
Thus, we can define  various control-theoretic notions for  the linear \emph{quantum} system \eqref{eq:sys} in terms of those for the linear \emph{classical} system \eqref{eq:real_sys_mean}.

\begin{defn}
	\label{def:stab_ctrb_obsv}  (\cite[Def. 1]{GZ15}, \cite[Def. 3.1]{ZD22} and \cite[Def. 2.1]{DZLP26})
	The linear quantum system \eqref{eq:sys} is said to be Hurwitz stable (resp. controllable, observable, detectable, stabilizable, minimal) if the corresponding linear classical system \eqref{eq:real_sys_mean} is Hurwitz stable (resp. controllable, observable, detectable, stabilizable, minimal).
\end{defn}

The justification of Definition \ref{def:stab_ctrb_obsv}  from a \emph{physical} point of view can be found in Ref. \cite{YK03b}.

If $C_+=0$ and $\Omega_+=0$,  the resulting
quantum linear system is said to be
\textit{passive} \cite{ZJ11,GZ15,NY17,ZGPG18}.  The It\^o QSDEs
for a passive linear
quantum system are
\beq\label{eq:sys_passive}
\begin{aligned}
	\dot{\mbf{a}} =&\; A \mbf{a} +B \mbf{b}_{\rm in},  \\
	\mbf{b}_{\rm out} =&\; C \mbf{a} +D  \mbf{b}_{\rm in},
\end{aligned}
\eeq
where 
\[
A=-\imath\Omega _{-}-\frac{1}{2}C_{-}^{\dagger }C_{-},  \ \  B=-C_{-}^{\dagger }S,\ C=C_{-}, \  D = S.
\]
Accordingly,  the physical realizability conditions~\eqref{eq:PR_a_b} reduce to
\begin{equation*}
	\label{eq:passive_PR}
	A+A^{\dagger }+BB^{\dagger }=0, ~B=-C^{\dagger }S.
\end{equation*}
Moreover, in the passive case,  $\Xi_{G^+}[s] \equiv 0 $ in Eq. \eqref{eq:tf_G}, and
\begin{equation}\label{eq:tf_passive_0}
	\Xi_{G^-}[s]  = S -C_-(sI+\imath\Omega _{-}+\frac{1}{2}C_{-}^{\dagger }C_{-})^{-1}C_-^\dagger S.
\end{equation}
Finally, it can be easily verified that for a quantum linear passive system, the following holds
\beq \label{eq:tf_passive}
\Xi_{G^-}[\imath\omega]^\dagger  \Xi_{G^-}[\imath\omega] \equiv I_m, ~~~ \forall \omega\in \mathbbm{R}.
\eeq
As a result, a quantum linear passive system is like an all-pass filter that does not change the amplitude of the input signal, but modifies its phase.   According to \cite[Lemma 2]{GZ15},  controllability,  observability and Hurwitz stability are equivalent to each other. Consequently, if  the passive linear system \eqref{eq:sys_passive} is a minimal realization, then by \cite[Theorem 2]{P13}, it is physically realizable if and only the associated transfer matrix  in \eqref{eq:tf_passive_0} satisfies Eq. \eqref{eq:tf_passive}.

Example 2.1 in Ref. \cite{DZLP26} demonstrated that detectability and stabilizability may not be equivalent for general linear quantum systems. However, they are equivalent in the passive case. 

\begin{prop} \label{prop}
The passive linear system \eqref{eq:sys_passive} is stabilizable if and only if it is detectable.
\end{prop}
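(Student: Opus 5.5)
We prove the statement by the PBH (Popov--Belevitch--Hautus) tests, using the structural identity $B=-C^\dagger S$ with $S$ unitary together with $A+A^\dagger+BB^\dagger=0$. Since $BB^\dagger = C^\dagger S S^\dagger C = C^\dagger C$, the physical realizability condition becomes
\[
A+A^\dagger = -C^\dagger C .
\]
Recall that $(A,B)$ is stabilizable iff there is no nonzero $v$ with $v^\dagger A=\lambda v^\dagger$, $v^\dagger B=0$ and $\mathrm{Re}\,\lambda\ge 0$. Likewise, $(C,A)$ is detectable iff there is no nonzero $x$ with $Ax=\lambda x$, $Cx=0$ and $\mathrm{Re}\,\lambda\ge0$.

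The key step is to show that the uncontrollable left eigenvectors of $A$ are exactly the conjugate-transposes of the unobservable right eigenvectors of $A$. We also need to control the eigenvalues. Suppose $Ax=\lambda x$ and $Cx=0$. Then
\[
A^\dagger x = (-A - C^\dagger C)x = -\lambda x .
\]
Hence $x^\dagger A = -\lambda^\ast x^\dagger$, and $x^\dagger B = -x^\dagger C^\dagger S = -(Cx)^\dagger S = 0$. Moreover, $2\mathrm{Re}\,\lambda\,\|x\|^2 = x^\dagger(A+A^\dagger)x = -\|Cx\|^2 = 0$, so $\lambda$ is purely imaginary and $-\lambda^\ast=\lambda$. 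Thus an unobservable mode with eigenvalue $\lambda$ yields an uncontrollable left eigenvector with the same eigenvalue $\lambda$, which lies on the imaginary axis.

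The converse direction runs symmetrically. Suppose $v^\dagger A=\lambda v^\dagger$ and $v^\dagger B=0$. Then $v^\dagger C^\dagger S=0$, and unitarity of $S$ gives $Cv=0$. Next, $A^\dagger v = \lambda^\ast v$, so
\[
Av = (-A^\dagger - C^\dagger C)v = -\lambda^\ast v ,
\]
and again $\mathrm{Re}\,\lambda=0$ by the same quadratic-form computation. So $v$ is an unobservable right eigenvector with eigenvalue $\lambda$.

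Combining the two directions, the system fails to be stabilizable iff there is an uncontrollable imaginary-axis mode iff there is an unobservable imaginary-axis mode iff it fails to be detectable. In fact, every uncontrollable or unobservable mode is automatically on the imaginary axis, so stabilizability is equivalent to controllability and detectability to observability. This is consistent with \cite[Lemma 2]{GZ15}, which one could alternatively invoke to shorten the argument.

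The only delicate point is making sure that the eigenvalue condition $\mathrm{Re}\,\lambda\ge0$ transfers correctly under the sign flip $\lambda\mapsto-\lambda^\ast$. The quadratic-form argument settles this by forcing $\mathrm{Re}\,\lambda=0$, so the flip leaves $\lambda$ unchanged.
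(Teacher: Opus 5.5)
Your proof is correct and complete. The paper gives no proof of this proposition; it calls it straightforward and omits it, so there is no argument to match line by line. The surrounding text suggests the authors had in mind the equivalence of controllability, observability and Hurwitz stability for passive systems, quoted from \cite[Lemma 2]{GZ15}.

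Your PBH argument is self-contained and re-derives what is actually needed. From $A+A^\dagger=-C^\dagger C$ you show that uncontrollable left eigenvectors and unobservable right eigenvectors are the same vectors, with the same eigenvalue, and that this eigenvalue always lies on the imaginary axis. This gives the stronger chain stabilizable $\Leftrightarrow$ controllable $\Leftrightarrow$ observable $\Leftrightarrow$ detectable.

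This step is needed even if one cites the lemma. The quoted equivalence by itself does not exclude a non-Hurwitz system whose uncontrollable modes are all stable while some unobservable mode is not. Your quadratic-form computation $2\,\mathrm{Re}\,\lambda\,\|x\|^2=-\|Cx\|^2$ is what rules this out.

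The same ingredients also give a slightly shorter write-up. The inequality $A+A^\dagger\le 0$ places $\sigma(A)$ in the closed left half-plane. Any eigenvector $x$ for an imaginary-axis eigenvalue $\imath\omega$ satisfies $Cx=0$, $x^\dagger A=\imath\omega x^\dagger$ and $x^\dagger B=0$. Hence stabilizable $\Leftrightarrow$ Hurwitz stable $\Leftrightarrow$ detectable.

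One minor formal point: Definition~\ref{def:stab_ctrb_obsv} refers to the doubled-up mean system. In the passive case that system is block diagonal with blocks $(A,B,C)$ and $(A^\#,B^\#,C^\#)$, so working directly with $(A,B,C)$, as you do, loses nothing.
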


The proof of Proposition \ref{prop} is straightforward, thus is omitted.


So far we have used the complex annihilation–creation operator representation. An alternative is the real quadrature operator representation \cite[Sec.~II.E]{ZJ11}.   For a positive integer $k$,  define the unitary matrix
\beqm \label{eq:unitary_V}
V_{k}= \frac{1}{\sqrt{2}}\left[\begin{array}{@{}cc@{}}I_{k} & I_{k} \\
	-\imath I_{k} & \imath I_{k}
\end{array}
\right].
\eeqm
The following coordinates transformations
\beq  \label{complex_to_real_trans}
\begin{aligned}
&	\left[\begin{array}{c}\mbf{q} \\
		\mbf{p}
	\end{array}
	\right] \equiv  \mbf{x} = V_{n}\breve{\mbf{a}}, 
	\left[\begin{array}{c}
		\mbf{q}_{\mathrm{in}} \\
		\mbf{p}_{\mathrm{in}}
	\end{array}
	\right] \equiv \mbf{u} =V_{m}\breve{\mbf{b}}_{\rm in},
	\\
	&\left[\begin{array}{c}\mbf{q}_{\mathrm{out}} \\
		\mbf{p}_{\mathrm{out}}
	\end{array}
	\right] \equiv \mbf{y} = V_{m}\breve{\mbf{b}}_{
		\mathrm{out}}
\end{aligned}
\eeq
generate real quadrature operators of the system and the fields. The counterparts of the  commutation relations \eqref{eq:CCR_a_2} and \eqref{eq:CCR_b} are $[\mbf{x}(t), \mbf{x}(t)^\top] =\imath \mathbbm{J}_n$, 
and
$[\mbf{u}(t), \ \mbf{u}^\top(r)] = \imath\delta(t-r)\mathbbm{J}_m, \ t,r\in\mathbbm{R}$
respectively. In terms of unitary transformations in Eq. \eqref{complex_to_real_trans}, the coupling operator $\mbf{L}$ and the Hamiltonian $\mbf{H}$ are transformed to $\mbf{L} =  \Lambda\mbf{x}$, and  $\mbf{H}= \frac{1}{2}\mbf{x} ^\top\mathbbm{H} \mbf{x}$, where $\Lambda   = [C_{-}\ C_{+}]V_n^\dagger$ and  $\mathbbm{H}  = V_n \Omega V_n^\dagger$. Therefore, the
QSDEs that describe the dynamics of the linear quantum system  in the real quadrature operator representation are
\beq\la{eq:real_sys}
\begin{aligned}
\dot{\mbf{x}} =&\; \mathbbm{A} \mbf{x} + \mathbbm{B} \mbf{u},
\\
\mbf{y} =&\; \mathbbm{C} \mbf{x} + \mathbbm{D} \mbf{u},
\end{aligned}
\eeq
where
\beq \label{eq:real_sys_ABCD}
\begin{aligned}
\mathbbm{D} =&\; V_{m} \mathcal{D} V_{m}^{\dagger} =\left[
\bey{@{}cc@{}}
\mathrm{Re}(S) & -\mathrm{Im}(S) \\
\mathrm{Im}(S)  & \mathrm{Re}(S)
\eey
\right],
\\
\mathbbm{C} =&\; V_{m} \mathcal{C}
V_{n}^{\dagger}=
\left[
\bey{@{}ll@{}}
\mathrm{Re}(C_-+C_+) & -\mathrm{Im}(C_--C_+)\\
\mathrm{Im}(C_-+C_+) & \mathrm{Re}(C_--C_+)
\eey
\right],
\\
\mathbbm{B} =&\; V_{n} \mathcal{B} V_{m}^{\dagger} 
=-\left[
\bey{@{}ll@{}}
\mathrm{Re}(C_-^\dag-C_+^\dag) & -\mathrm{Im}(C_-^\dag-C_+^\dag)\\
\mathrm{Im}(C_-^\dag+C_+^\dag) & \mathrm{Re}(C_-^\dag+C_+^\dag)
\eey
\right]\mathbbm{D} ,
\\
\mathbbm{H} =&\;V_{n} \Omega V_{n}^{\dagger} = \left[
\begin{array}{cc}
	\mathrm{Re}(\Omega_-+\Omega_+)  & -\mathrm{Im}(\Omega_--\Omega_+)   \\
	\mathrm{Im}(\Omega_-+\Omega_+)  & \mathrm{Re}(\Omega_--\Omega_+)
\end{array}
\right],
\\
\mathbbm{A} =&\; V_{n} \mathcal{A} V_{n}^{\dagger} = \mathbbm{J}_n \mathbbm{H} -\frac{1}{2}\mathbbm{C}^{\sharp}\mathbbm{C}.
\end{aligned}
\eeq
It can be readily verified that the matrix $\mathbbm{J}_n \mathbbm{H}$ is a Hamiltonian matrix \cite[Sec. 7.8]{GvL13}, and hence we have
\beq \label{eq:J_nH}
( \mathbbm{J}_n\mathbbm{H})^\sharp = -\mathbbm{J}_n\mathbbm{H}.
\eeq
In the real-quadrature-operator representation, the physical realizability
conditions (\ref{eq:PR_a_b}) take the form
\begin{equation} \la{eq:PR_real}
	\mathbbm{A} + \mathbbm{A}^{\sharp} + \mathbbm{B}\mathbbm{B}^{\sharp}=0,~\mathbbm{B}=-\mathbbm{C}^{\sharp}\mathbbm{D}^{\sharp}.
\end{equation}
 We denote the transfer matrix of system \eqref{eq:real_sys} by $\mathbbm{G}(s)$. Clearly,
\beq\label{eq:G_real}
\mathbbm{G}(s) = V_m G(s)V_m^\dag,
\eeq
which satisfies
\beq \la{eq:jun7_temp2}
\mathbbm{G}(-s^\ast)^\sharp \mathbbm{G}(s)=
\mathbbm{G}(s)\mathbbm{G}(-s^\ast)^\sharp= I_{2m},
\eeq
see, e.g.,  \cite{GJN10,ZJ13,GZ15,ZD22} for more details. 

In the \emph{passive} case, the system matrices in  Eq. \eqref{eq:real_sys_ABCD} become
\beq \label{eq:real_sys_ABCD_passive}
\begin{aligned}
\mathbbm{D} =&\left[
\bey{@{}cc@{}}
\mathrm{Re}(S) & -\mathrm{Im}(S) \\
\mathrm{Im}(S)  & \mathrm{Re}(S)
\eey
\right], ~~ 
\mathbbm{C} =
\left[
\bey{@{}ll@{}}
\mathrm{Re}(C_-) & -\mathrm{Im}(C_-)\\
\mathrm{Im}(C_-) & \mathrm{Re}(C_-)
\eey
\right],
\\
\mathbbm{B} =&-\mathbbm{C}^\top\mathbbm{D} , ~~
\mathbbm{H} = \left[
\begin{array}{cc}
	\mathrm{Re}(\Omega_-)  & -\mathrm{Im}(\Omega_-)   \\
	\mathrm{Im}(\Omega_-)  & \mathrm{Re}(\Omega_-)
\end{array}
\right],
\\
\mathbbm{A} =&\;  \mathbbm{J}_n \mathbbm{H} -\frac{1}{2}\mathbbm{C}^\top \mathbbm{C}.
\end{aligned}
\eeq

We end this section with a result for passive linear quantum systems.

\begin{prop}
The passive quantum linear system \eqref{eq:real_sys} is Hurwitz stable if and only if and only if the matrix $\mathbbm{A}$ is non-singular.  The matrix $A$ is singular if there are less input fields than the number of the system harmonic oscillators.
\end{prop}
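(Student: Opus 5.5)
The plan is to work in the complex annihilation-operator form \eqref{eq:sys_passive}. The real matrix $\mathbbm{A}$ in \eqref{eq:real_sys_ABCD_passive} is the standard real representation of the complex matrix $A=-\imath\Omega_- -\tfrac12 C_-^\dagger C_-$. Its spectrum is therefore the union of the spectrum of $A$ and that of $A^\#$. Consequently $\mathbbm{A}$ is Hurwitz (respectively nonsingular) if and only if $A$ is. This reduces both claims to statements about the $n\times n$ complex matrix $A$.

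For the first claim, the key identity is $A+A^\dagger=-C_-^\dagger C_-\le 0$. Let $Av=\lambda v$ with $\|v\|=1$. Taking the quadratic form gives
\[
2\,\mathrm{Re}\,\lambda = v^\dagger(A+A^\dagger)v=-\|C_-v\|^2\le 0,
\]
so every eigenvalue lies in the closed left half-plane. Nonsingularity is clearly necessary for Hurwitz stability.

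For sufficiency, suppose $A$ is not Hurwitz. Then there is an eigenvalue $\lambda=\imath\omega$, and the display forces $C_-v=0$. Hence $Av=-\imath\Omega_- v=\imath\omega v$, so $v$ is an unobservable eigenvector on the imaginary axis. The remaining task is to show that this forces $\omega=0$, which would contradict nonsingularity. I expect this to be the main obstacle. As stated, the equivalence cannot hold unconditionally: a closed oscillator mode with $\Omega_-v=-\omega v$, $\omega\neq 0$, and $C_-v=0$ gives a nonsingular but non-Hurwitz $A$. The first attempt will therefore be to use the standing property from \cite[Lemma 2]{GZ15}, namely that controllability, observability and Hurwitz stability are equivalent. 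I would read the proposition under an assumption that excludes decoupled oscillating modes, for example that every imaginary-axis mode of $\Omega_-$ is at zero frequency, as in a resonant frame. Under that reading an unobservable imaginary-axis eigenvector must have $\omega=0$, so $A$ is singular, and the contrapositive gives the claim. I would state this interpretation explicitly in the proof.

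For the second claim, I would use $\operatorname{rank}(C_-^\dagger C_-)\le m<n$. This gives a nonzero $v\in\ker C_-$, and on that subspace $A$ acts as $-\imath\Omega_-$. For $A$ itself to be singular one additionally needs $\Omega_- v=0$ for some such $v$. Again I would invoke the same frame assumption (zero detuning, so $\Omega_-$ vanishes on the unobserved modes) to obtain $Av=0$. Then $\mathbbm{A}$ is singular, and by the first part the system is not Hurwitz stable. I expect that making precise which assumption on $\Omega_-$ the authors intend is the delicate point of the whole argument.
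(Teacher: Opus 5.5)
Your diagnosis is right: the statement is false as written, and the paper's own proof does not repair it.

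\emph{The paper's argument and where it fails.} The paper applies the rotation \eqref{eq:rotation} to remove $\Omega_-$, obtaining $\mathbbm{A}=-\frac12\mathbbm{C}^\top\mathbbm{C}\le 0$. For a symmetric negative semidefinite matrix, Hurwitz $\Leftrightarrow$ nonsingular $\Leftrightarrow$ $\mathbbm{C}$ has full column rank, which is impossible when $m<n$. This reduction fails in two places. First, $\tilde{\mbf{a}}=e^{\imath\Omega_- t}\mbf{a}$ satisfies $\dot{\tilde{\mbf{a}}}=-\frac12 e^{\imath\Omega_- t}C_-^\dagger C_- e^{-\imath\Omega_- t}\tilde{\mbf{a}}+\cdots$. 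This is time-invariant only when $\Omega_-$ commutes with $C_-^\dagger C_-$. Second, even then the rotation only slides each eigenvalue in the imaginary direction. That preserves Hurwitz stability but not singularity, so the matrix whose singularity the paper actually tests is $-\frac12\mathbbm{C}^\top\mathbbm{C}$, not the $\mathbbm{A}$ of \eqref{eq:real_sys_ABCD_passive}.

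Your closed-mode example ($n=1$, $C_-=0$, $\Omega_-=\omega\neq0$, where the rotation is legitimate) is exactly this second failure. The second claim is false even for standard stable devices. Take $n=2$, $m=1$, $\Omega_-=\left[\begin{smallmatrix}0&g\\ g&0\end{smallmatrix}\right]$, $C_-=[\sqrt{\kappa}\ \ 0]$. Then $A=\left[\begin{smallmatrix}-\kappa/2&-\imath g\\ -\imath g&0\end{smallmatrix}\right]$, $\det A=g^2\neq 0$, and the characteristic polynomial is $\lambda^2+\frac{\kappa}{2}\lambda+g^2$. So two coupled cavities sharing one input field give a nonsingular, Hurwitz $A$ with $m<n$. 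No proof of the proposition as written can exist. Your direct argument via $A+A^\dagger=-C_-^\dagger C_-$ is the right tool, because unlike the rotation it handles arbitrary $\Omega_-$.

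\emph{What to tighten.} Drop the appeal to \cite[Lemma 2]{GZ15}. It equates Hurwitz stability with observability, and your counterexample is neither, so it cannot close anything.

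State your hypotheses in terms of what your computation actually proves:
\begin{itemize}
\item $A$ has an imaginary-axis eigenvalue iff some eigenvector of $\Omega_-$ lies in $\ker C_-$.
\item $A$ is singular iff $\ker C_-\cap\ker\Omega_-\neq\{0\}$.
\end{itemize}
So the first claim holds whenever every eigenvector of $\Omega_-$ lying in $\ker C_-$ has eigenvalue $0$.

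That is \emph{not} the assumption that rescues the second claim, so calling it ``the same frame assumption'' is a slip. The coupled-cavity example satisfies it vacuously (no eigenvector of $\Omega_-$ lies in $\ker C_-$), yet $A$ is nonsingular. The second claim needs $\ker C_-\cap\ker\Omega_-\neq\{0\}$. The condition $m<n$ does not supply this by itself, but $\Omega_-=0$ does, and that is the tacit assumption of the paper's rotated frame.

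The cleanest correct replacement is the commuting case. If $\Omega_-$ commutes with $C_-^\dagger C_-$, simultaneous diagonalization gives eigenvalues $-\imath\omega_j-\frac12 c_j$, where $\omega_j$ and $c_j\ge0$ are the joint eigenvalues. Then the system is Hurwitz stable iff $C_-$ has full column rank, and hence it is never Hurwitz when $m<n$. Outside that setting, present the second claim as false via the counterexample rather than patching it.
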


\emph{Proof.}  Applying rotations (unitary operations)
\beq \label{eq:rotation}
\begin{aligned}
&\mbf{a}(t) \to e^{\imath  \Omega_-  t}\mbf{a}(t), ~ \mbf{b}_{\rm in}(t)  \to   e^{\imath \Omega_- t } \mbf{b}_{\rm in}(t),
\\
&  \mbf{b}_{\rm out}(t)  \to   e^{\imath \Omega_-  t} \mbf{b}_{\rm out}(t)
\end{aligned}
\eeq
transform the passive linear quantum system \eqref{eq:sys_passive} to
\beq\label{eq:sys_passive_rotation}
\begin{aligned}
\dot{\mbf{a}} =&\; -\frac1{2}C^\dag C \mbf{a} -C^\dag S \mbf{b}_{\rm in},  \\
\mbf{b}_{\rm out} =&\;  C\mbf{a} +S \mbf{b}_{\rm in}.
\end{aligned}
\eeq
As a result, under the rotations in Eq. \eqref{eq:rotation},   the term  $\mathbbm{J}_n\mathbbm{H}$  vanishes and therefore $\mathbbm{A}  =  -\frac{1}{2}\mathbbm{C}^\top\mathbbm{C}$, which is negative semi-definite. Thus, the quantum linear system \eqref{eq:real_sys} is Hurwitz stable if and only if $\mathbbm{A}$ is non-singular. If there are less input fields that the number of the system harmonic oscillators, then the matrix $\mathbbm{C}\in \mathbbm{R}^{2m\times 2n}$ cannot have full column rank and thus the matrix $\mathbbm{A}$ must be singular.  \hfill $\Box$

\section{Coherent feedback  $H^\infty$ control: the general case}\label{sec:real}

In this section, we investigate the coherent feedback $H^\infty$ control problem for general quantum linear systems within the framework of the real-quadrature operator representation. The coherent feedback network  is shown in Fig. \ref{fig_cl}. 

\begin{figure}[h!]
	\centering
	\includegraphics[width=0.4\textwidth]{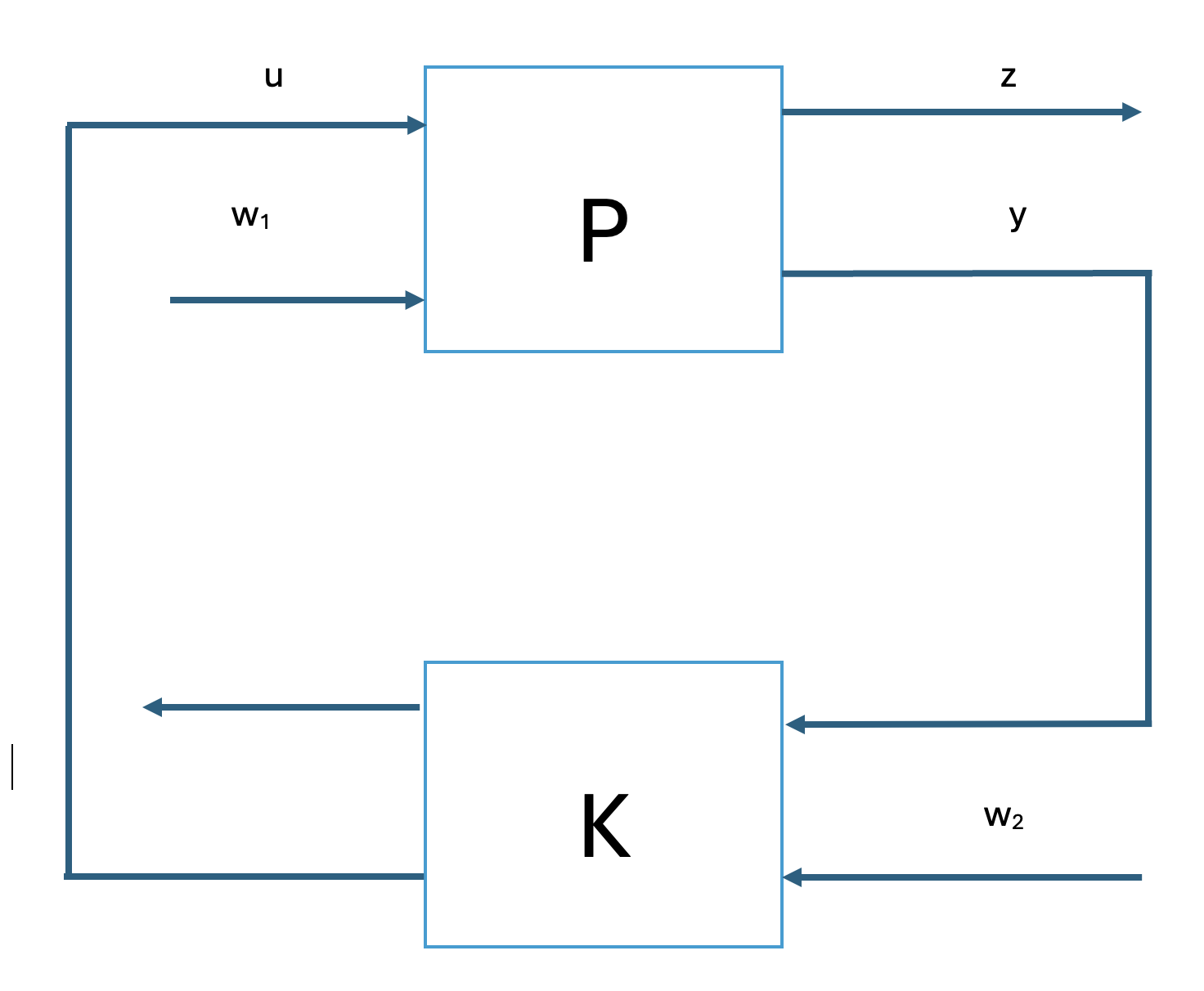}\\
	\caption{The coherent feedback system composed of a quantum plant $P$ and a quantum controller $K$. }
	\label{fig_cl}
\end{figure}

The plant $P$ in the real-quadrature operator representation  is
\beq \label{eq: plant_0}
\begin{aligned}
	\dot{\mbf{x}} =&\; \mathbbm{A} \mbf{x} +  \mathbbm{B}_1 \mbf{w}_1 + \mathbbm{B}_2 \mbf{u}, \\
	\mbf{z}           =&\; \mathbbm{C}_1 \mbf{x} + \mathbbm{D}_{12} \mbf{u},\\
	\mbf{y}           =&\; \mathbbm{C}_2\mbf{x} +  \mathbbm{D}_{21}  \mbf{w}_1.
\end{aligned}
\eeq
Here, $\mbf{u}$ is a vector of $k$ control input fields, and   $\mbf{w}_1$ is a vector of $l$ disturbance input fields of the form
 \beq \label{eq:signal+noise}
 \mbf{w}_1 (t)  = \beta_u(t) +\overset{\sim}{\mbf{w}}_1 (t),
 \eeq
 where $\beta_u(t)$ are  self-adjoint, adapted processes which encode the signal part of the quantum field $\mbf{w}_1$, while $\overset{\sim}{\mbf{w}}_1 (t)$ is quantum vacuum noise \cite{HP84,YK03b,JNP08}. An example of $ \mbf{w}_1$ is a laser beam which is  undesirable (in other words, a disturbance; see for examples Refs.  \cite{JNP08,Mabuchi08}).  Another example is that $ \mbf{w}_1$ is the output of another quantum system \cite[Fig. 1]{ZJ11}.  $\mbf{z}$ is the output quantum field corresponding to the input field $\mbf{u}$ and $\mbf{y}$ is the output quantum field corresponding to the input field  $\mbf{w}_1$.   According to Eq. \eqref{eq:real_sys_ABCD}, the system matrices are 
\beq \label{eq:ABCD_0}
\begin{aligned}
\mathbbm{A} =&\; \mathbbm{J}_n \mathbbm{H} -\frac{1}{2}\mathbbm{C}_1^{\sharp}\mathbbm{C}_1-\frac{1}{2}\mathbbm{C}_2^{\sharp}\mathbbm{C}_2, \\
\mathbbm{B}_1 =&\; -\mathbbm{C}_2^\sharp \mathbbm{D}_{21},\\
 \mathbbm{B}_2 =&\; -\mathbbm{C}_1^\sharp \mathbbm{D}_{12}.
\end{aligned}
\eeq
As discussed in Section \ref{sec:QLS},  in the $(S,\boldsymbol{L},\boldsymbol{H})$ formalism,  $S$ is a unitary matrix. 
As a result, by Eqs. \eqref{ABCD0} and \eqref{eq:real_sys_ABCD},  both $\mathbbm{D}_{12}$  and $\mathbbm{D}_{21}$ are real orthogonal matrices.

Let
\beq \label{eq:E1E2}
E_1 = \mathbbm{D}_{12}^\top \mathbbm{D}_{12}= I, \quad E_2 =\frac1{\gamma^2} \mathbbm{D}_{21}\mathbbm{D}_{21}^\top= \frac1{\gamma^2}I .
\eeq

Clearly, Assumptions \textbf{A1} and \textbf{A2} in  Ref. \cite{PAJ91}, which are items 1) and 2) in Assumption 5.2 in Ref. \cite{JNP08}, hold naturally.

The purpose is to design a quantum controller $K$ of the form 
\beq \label{eq:controller_real}
\begin{aligned}
	\dot{\mbf{x}}_K =&\; \mathbbm{A}_K \mbf{x}_K +\mathbbm{B}_K \mbf{y} + \tilde{\mathbbm{B}}_K \mbf{w}_2,\\
	\mbf{u} =&\; \mathbbm{C}_K \mbf{x}_K + \mbf{w}_2,\\
	\tilde{\mbf{u}} =&\; \tilde{\mathbbm{C}}_K \mbf{x}_K +\mbf{y},
\end{aligned}
\eeq
so that the feedback system  in Fig. \ref{fig_cl}  is internally stable and the $H^\infty$ norm of the closed-loop transfer matrix $T_{\mbf{w}_1\to \mbf{z}} $ is less than a prescribed disturbance attenuation level  $\gamma$. The output channel $\tilde{\mbf{u}}$ is not shown in Fig. \ref{fig_cl} as it is not used.   As in Ref. \cite{JNP08},
matrices $\mathbbm{A}_K,\mathbbm{B}_K,\mathbbm{C}_K$ are to be determined by the $H^\infty$ control design. On the other hand,  the physical realizability of the quantum controller $K$ demands
\beq\label{eq:dec16_BK_CK}
\tilde{B}_K =  -\mathbbm{C}_K^\sharp, \quad   \tilde{\mathbbm{C}}_K = -\mathbbm{B}_K^\sharp.
\eeq
The input field $\mbf{w}_2$  to the quantum controller \eqref{eq:controller_real} is  in the vacuum state. Thus, averaging over the joint initial system-field state yields
\beq\label{eq:controller_real_mean}
\begin{aligned}
\langle	\dot{\mbf{x}}_K\rangle =&\; \mathbbm{A}_K\langle	 \mbf{x}_K \rangle+\mathbbm{B}_K \langle	\mbf{y}\rangle,\\
	\langle	\mbf{u}\rangle =&\; \mathbbm{C}_K \langle\mbf{x}_K\rangle,\\
\langle		\tilde{\mbf{u}}\rangle =&\;  \tilde{\mathbbm{C}}_K\langle	 \mbf{x}_K \rangle+\langle	\mbf{y}\rangle,
\end{aligned}
\eeq
which is the form of the \emph{central solution} in \cite[Eq. (4)]{PAJ91} if the unused $\langle\tilde{\mbf{u}}\rangle $ is ignored; see Fig. \ref{fig_cl}. 

%
%
%

Define matrices
\beq\label{eq:A_x_A_y_dec13}
\begin{aligned}
\mathbbm{A}_x =&\; \mathbbm{A}-\mathbbm{B}_2 E_1^{-1} \mathbbm{D}_{12}^\top  \mathbbm{C}_1 = \mathbbm{J}_n\mathbbm{H}+\ff1{2}\mathbbm{C}_1^\sharp \mathbbm{C}_1 -\ff1{2}\mathbbm{C}_2^\sharp \mathbbm{C}_2, \\
 \mathbbm{A}_y  =&\; \mathbbm{A}-\ff1{\gamma^2}\mathbbm{B}_1 \mathbbm{D}_{21}^\top  E_2^{-1} \mathbbm{C}_2 = \mathbbm{J}_n\mathbbm{H}-\ff1{2}\mathbbm{C}_1^\sharp \mathbbm{C}_1 +\ff1{2}\mathbbm{C}_2^\sharp \mathbbm{C}_2. 
\end{aligned}
\eeq
By Eq. \eqref{eq:J_nH},  we have
\beq \label{eq:oct27_A_x_y}
 \mathbbm{A}_y  = - \mathbbm{A}_x^\sharp.
\eeq

\begin{rem} \label{rem:D12_D21}
{\rm 
Notice that the quantum plant \eqref{eq: plant_0} is slightly different from that in \cite[Eq. (21)]{JNP08}. 
\bei
\item First, in our setting the matrix $\mathbbm{D}_{12}$ is real orthogonal. According to Assumption \textbf{A1} in Ref. \cite{PAJ91}, which is item 1) in \cite[Assumption 5.2]{JNP08}, the matrix  $D_{12}$ in Ref. \cite{JNP08}   is of full column rank. Consequently, in Ref. \cite{JNP08}   the output vector $z$ has at least the same dimension as the input vector $u$. To maintain the commutation relations of the input and output fields, $D_{12}$ must therefore be square—and hence real orthogonal. As a result, our $\mathbbm{D}_{12}$ is the same as the $D_{12}$ in \cite[Eq. (21)]{JNP08}.

\item Second, \cite[Eq. (21)]{JNP08} includes an additional matrix $D_{20}$, which corresponds  to the quantum vacuum noise $v(t)$ there. If $v$ is treated as a disturbance with zero signal part; see Eq. \eqref{eq:signal+noise}, it can be merged with $w$, and accordingly $D_{20}$ can be grouped together with $D_{21}$. From this viewpoint, \cite[Eq. (21)]{JNP08} takes the same form as Eq. \eqref{eq: plant_0} in this paper. Thus, for simplicity, we assume the absence of the quantum noise $v(t)$ in our system.

\item  Item 2) of Assumption 5.2 in \cite{JNP08} requires the matrix $D_{21}$ be of full row rank.  This implies that the dimension of the input vector $w$ is at least that of the output vector $y$, meaning that some output channels associated with  the inputs $w$ may not be used for feedback. Combining $v$ and $w$ would further increase the number of such unused output channels. 
In contrast, we assume that the matrix  $\mathbbm{D}_{21}$ is real orthogonal, so the number of outputs $\mbf{y}$ equals the number of inputs $\mbf{w}_1$ and all channels participate in the feedback.  This is the disturbance feedforward scenario  of $H^\infty$ control; see for example  \cite[Section 16.6]{ZDG96}.

\item Finally, as established above, the matrix \(D_{12}\) in \cite[Eq. (21)]{JNP08} is real orthogonal, and \(D_{21}\) in \cite[Eq. (21)]{JNP08}  has full row rank. Consequently, Eq. \eqref{eq:A_x_A_y_dec13} holds.
Hence,  Eq. \eqref{eq:oct27_A_x_y}  is inherent to the structure of the type of linear quantum systems studied in this paper. 
\eei
}
\end{rem}

Assumptions \textbf{A3} and \textbf{A4} in Ref. \cite{PAJ91}, which are  items  3) and 4) in \cite[Assumption 5.2]{JNP08}, are respectively
\begin{subequations} 	\label{eq:A3A4}
	\beqn
	&& {\rm rank} \left[
	\bey{cc}
	\mathbbm{A}-\imath \omega I & \mathbbm{B}_2 \\
	\mathbbm{C}_1   & \mathbbm{D}_{12}
	\eey
	\right]  = 2n+2k  \ \ {\rm for ~all} \ \ \omega\geq 0,
	\label{eq:A3}
	\\
	&& {\rm rank} \left[
	\bey{cc}
	\mathbbm{A}-\imath \omega I & \mathbbm{B}_1 \\
	\mathbbm{C}_2   & \mathbbm{D}_{21}
	\eey
	\right]  = 2n+2l  \ \ {\rm for ~all} \ \ \omega\geq 0.
	\label{eq:A4}
	\eeqn
\end{subequations}

\begin{rem}{\rm
It can be verified that under the condition that  the matrix $\mathbbm{D}_{12}$ is real orthogonal, Eq. \eqref{eq:A3} is equivalent to that the matrix $\mathbbm{A}_x$ has no purely imaginary eigenvalues. Similarly,  under the condition that  the matrix $\mathbbm{D}_{21}$ is real orthogonal,  Eq. \eqref{eq:A4} is equivalent to that the matrix $\mathbbm{A}_y$ has no purely imaginary eigenvalues. Moreover,  by Eq. \eqref{eq:oct27_A_x_y}, $\lambda$ is an eigenvalue of $\mathbbm{A}_x$ if and only if $-\lambda$ is an eigenvalue of $\mathbbm{A}_y$.   As a result, Assumption \textbf{A3} is equivalent to Assumption \textbf{A4}.  This is a special feature of \emph{quantum} linear systems.
}
\end{rem}

Similar to \cite[Eqs. (2)-(3))]{PAJ91} or \cite[Eqs. (28)-(29)]{JNP08}, we consider the following AREs:
\begin{subequations} \label{eq:ARE_0}
	\beqn
	\mathbbm{A}_x^\top \mathbbm{X} + \mathbbm{X} \mathbbm{A}_x + \mathbbm{X}(\frac1{\gamma^2}  \mathbbm{B}_1 \mathbbm{B}_1^\top -\mathbbm{B}_2 \mathbbm{B}_2^\top)\mathbbm{X} &=&0 , \label{eq:X_0}\\
	\mathbbm{A}_y \mathbbm{Y} + \mathbbm{Y}  \mathbbm{A}_y^\top + \mathbbm{Y} (\mathbbm{C}_1^\top \mathbbm{C}_1 - \gamma^2 \mathbbm{C}_2^\top \mathbbm{C}_2)\mathbbm{Y}  &=& 0.  \label{eq:Y_0}
	\eeqn
\end{subequations}
Note that the last terms on the left-hand side of \cite[Eqs. (2)-(3))]{PAJ91} or \cite[Eqs. (28)-(29)]{JNP08} vanish due to  Eqs. \eqref{eq:ABCD_0}-\eqref{eq:E1E2}.

The following lemma is a direct consequence of \cite[Theorem 3.1]{PAJ91}; see also \cite[Theorem 5.4]{JNP08}.

\begin{lem}\label{lem1}
Assume  Eq. \eqref{eq:A3A4} holds.   There exists a quantum controller $K$ such that the closed-loop feedback system in  Fig. \ref{fig_cl} is internally stable and achieves disturbance attenuation  $T_{\mbf{w}_1\to \mbf{z}} <\gamma$ if and only if the AREs  \eqref{eq:ARE_0} have solutions   $\mathbbm{X} ,\mathbbm{Y} >0$ such that  the spectral radius $\rho(\mathbbm{X}\mathbbm{Y})<1$, and both $\mathbbm{A}_x+(\frac1{\gamma^2}  \mathbbm{B}_1 \mathbbm{B}_1^\top -\mathbbm{B}_2 \mathbbm{B}_2^\top)\mathbbm{X} $ and $\mathbbm{A}_y+\mathbbm{Y}(\mathbbm{C}_1^\top \mathbbm{C}_1 - \gamma^2 \mathbbm{C}_2^\top \mathbbm{C}_2) $ are Hurwitz stable.
\end{lem}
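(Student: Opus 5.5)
The plan is to obtain Lemma \ref{lem1} by specializing \cite[Theorem 3.1]{PAJ91}, which is the classical two-Riccati characterization of the nonsingular $H^\infty$ problem, to the mean dynamics of the plant \eqref{eq: plant_0}. Physical realizability is then added afterwards, following the argument of \cite[Theorem 5.4]{JNP08}.

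First, I take quantum expectations in \eqref{eq: plant_0} and \eqref{eq:controller_real}. Because $\mbf{w}_2$ is vacuum, this yields a classical LTI plant with matrices $(\mathbbm{A},\mathbbm{B}_1,\mathbbm{B}_2,\mathbbm{C}_1,\mathbbm{C}_2,\mathbbm{D}_{12},\mathbbm{D}_{21})$, $D_{11}=0$ and $D_{22}=0$, in closed loop with the classical controller \eqref{eq:controller_real_mean}. By Definition \ref{def:stab_ctrb_obsv}, internal stability of the quantum loop is the same as internal stability of this classical loop. The transfer matrix $T_{\mbf{w}_1\to\mbf{z}}$ depends only on $(\mathbbm{A}_K,\mathbbm{B}_K,\mathbbm{C}_K)$, so the closed-loop performance is also identical in the two settings.

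Second, I check the hypotheses of \cite[Theorem 3.1]{PAJ91}. Assumptions \textbf{A1} and \textbf{A2} hold because $\mathbbm{D}_{12}$ and $\mathbbm{D}_{21}$ are orthogonal, which gives \eqref{eq:E1E2}. Assumptions \textbf{A3} and \textbf{A4} are exactly \eqref{eq:A3A4}, which we assume. Substituting $E_1=I$, $E_2=\gamma^{-2}I$, $\mathbbm{B}_1=-\mathbbm{C}_2^\sharp\mathbbm{D}_{21}$ and $\mathbbm{B}_2=-\mathbbm{C}_1^\sharp\mathbbm{D}_{12}$ into the AREs \cite[Eqs. (2)--(3)]{PAJ91} gives \eqref{eq:ARE_0}, with $\mathbbm{A}_x$ and $\mathbbm{A}_y$ as in \eqref{eq:A_x_A_y_dec13}. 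The constant terms vanish: $\mathbbm{C}_1^\top(I-\mathbbm{D}_{12}E_1^{-1}\mathbbm{D}_{12}^\top)\mathbbm{C}_1=0$ by orthogonality, and the analogous term in the $\mathbbm{Y}$-equation vanishes for the same reason. With this reduction, \cite[Theorem 3.1]{PAJ91} says that a stabilizing classical controller with $\|T\|_\infty<\gamma$ exists if and only if \eqref{eq:ARE_0} has stabilizing solutions $\mathbbm{X},\mathbbm{Y}>0$ with $\rho(\mathbbm{X}\mathbbm{Y})<1$. This is the stated condition, and in that case the central controller has the form \eqref{eq:controller_real_mean}.

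Third, I handle the quantum direction.

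\emph{Necessity.} Any quantum controller $K$ achieving the objective induces, through its mean dynamics, a classical controller achieving it. Hence the Riccati conditions must hold.

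\emph{Sufficiency.} Given the Riccati solutions, I take the central controller $(\mathbbm{A}_K,\mathbbm{B}_K,\mathbbm{C}_K)$ and show that it can be embedded in a physically realizable quantum system. To do this, I set $\tilde{\mathbbm{B}}_K=-\mathbbm{C}_K^\sharp$ and $\tilde{\mathbbm{C}}_K=-\mathbbm{B}_K^\sharp$ as in \eqref{eq:dec16_BK_CK}. If needed, I add further vacuum noise channels so that $\mathbbm{A}_K+\mathbbm{A}_K^\sharp+\mathbbm{B}_K\mathbbm{B}_K^\sharp+\tilde{\mathbbm{B}}_K\tilde{\mathbbm{B}}_K^\sharp+\cdots=0$, exactly as in the construction of \cite[Theorem 5.4]{JNP08} and \cite{VP16}. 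These extra channels enter only through the vacuum and do not change the mean dynamics, so internal stability and the bound $\|T_{\mbf{w}_1\to\mbf{z}}\|_\infty<\gamma$ are preserved.

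I expect this last step, verifying that the realizability augmentation leaves the closed-loop mean transfer function unchanged, to be the only delicate point. Here I would rely directly on \cite[Theorem 5.4]{JNP08}.
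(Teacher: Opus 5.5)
Your route is the paper's route. The paper gives no argument beyond calling the lemma a direct consequence of \cite[Theorem~3.1]{PAJ91} (see also \cite[Theorem~5.4]{JNP08}). Your passage to mean dynamics, your check of \textbf{A1}--\textbf{A4}, the vanishing of the constant Riccati terms, and the vacuum-channel padding of the central controller (cf.\ the paper's remark after \eqref{eq:controller_ABC_real}) just unpack that citation. The sufficiency half is fine.

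The step that fails is your claim that the classical theorem gives stabilizing solutions $\mathbbm{X},\mathbbm{Y}>0$ and that ``this is the stated condition.'' In this quantum setting the two stabilizing solutions cannot both be positive definite.

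Put $\mathbbm{A}_c=\mathbbm{A}_x+(\frac{1}{\gamma^2}\mathbbm{B}_1\mathbbm{B}_1^\top-\mathbbm{B}_2\mathbbm{B}_2^\top)\mathbbm{X}$. Since \eqref{eq:X_0} has no constant term, it can be rewritten as $\mathbbm{A}_c^\top\mathbbm{X}+\mathbbm{X}\mathbbm{A}_x=0$. So if $\mathbbm{A}_x v=\lambda v$ with $\mathrm{Re}\,\lambda<0$, then $\mathbbm{A}_c^\top(\mathbbm{X}v)=-\lambda\,\mathbbm{X}v$, and Hurwitz stability of $\mathbbm{A}_c$ forces $\mathbbm{X}v=0$.

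The same argument applies to $\mathbbm{Y}$. Equation \eqref{eq:Y_0} reads $\mathbbm{A}_c'\mathbbm{Y}+\mathbbm{Y}\mathbbm{A}_y^\top=0$, where $\mathbbm{A}_c'$ is the second closed-loop matrix. Hence the stabilizing $\mathbbm{Y}$ annihilates every eigenvector of $\mathbbm{A}_y^\top$ with a stable eigenvalue. By \eqref{eq:oct27_A_x_y}, these stable eigenvalues are exactly the negatives of the anti-stable eigenvalues of $\mathbbm{A}_x$.

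Since \eqref{eq:A3} rules out imaginary-axis eigenvalues of $\mathbbm{A}_x$, at least one of $\mathbbm{X},\mathbbm{Y}$ must be singular. So the lemma's right-hand side, with strict inequalities, is never met. Yet controllers do exist: the paper's own examples take the unique stabilizing solution to be $Y=0$.

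The necessity direction therefore holds only with $\mathbbm{X}\geq 0$, $\mathbbm{Y}\geq 0$, which is what the classical two-Riccati theorem actually delivers. This is also how the lemma is used in the proof of Theorem~\ref{thm_general}, where $\tilde{\mathbbm{X}}$ and $\tilde{\mathbbm{Y}}$ have zero blocks. You should state and prove the lemma with semidefinite solutions. With that correction your argument is complete; as written, its ``only if'' half asserts something false, a defect it shares with the statement itself.
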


Applying Schur decomposition to $\mathbbm{A}_x$ in Eq. \eqref{eq:A_x_A_y_dec13}  yields
\beq \label{eq:tilde_A}
\mathbbm{W} \mathbbm{A}_x \mathbbm{W}^\top = \left[
\bey{cc}
\mathbbm{A}_{x1} & \mathbbm{A}_{x2}\\
0   & \mathbbm{A}_{x3} 
\eey
\right] = \tilde{\mathbbm{A}}_x,
\eeq
where $\mathbbm{W}$ is  orthogonal,   $ \mathbbm{A}_{x1}$ is stable and $ \mathbbm{A}_{x3}$ is anti-stable (i.e, $-\mathbbm{A}_{x3}$ is stable). Accordingly, denote 
\beq\label{eq:tilde_B}
\mathbbm{W}  \mathbbm{B}_1 = \left[
\bey{c}
\mathbbm{B}_{1x1}\\
\mathbbm{B}_{1x2}
\eey
\right] = \tilde{\mathbbm{B}}_1, \quad 
\mathbbm{W} \mathbbm{B}_2= \left[
\bey{c}
\mathbbm{B}_{2x1}\\
\mathbbm{B}_{2x2}
\eey
\right]= \tilde{\mathbbm{B}}_2.
\eeq

The following result shows that instead of solving the coupled AREs \eqref{eq:X_0}-\eqref{eq:Y_0}, it suffices to solve at most four Lyapunov equations.

\begin{thm}\label{thm_general}
Assume  Eq. \eqref{eq:A3} holds.   There exists a quantum controller $K$ such that the closed-loop feedback system in  Fig. \ref{fig_cl} is internally stable and achieves disturbance attenuation  $T_{\mbf{w}_1\to \mbf{z}} <\gamma$ if  there are $\mathbbm{S},\mathbbm{T},\mathbbm{U},\mathbbm{V}>0$ which solve the Lyapunov equations  
\begin{subequations}\label{eq:S and T_real}
	\beqn
	-\mathbbm{A}_{x3}\mathbbm{S} - \mathbbm{S} \mathbbm{A}_{x3}^\top +\mathbbm{B}_{2x2} \mathbbm{B}_{2x2}^\top &=&0, \\
	-\mathbbm{A}_{x3} \mathbbm{T} - \mathbbm{T} \mathbbm{A}_{x3}^\top  +\mathbbm{B}_{1x2}\mathbbm{B}_{1x2}^\top &=& 0 ,
	\eeqn
\end{subequations}
and 
\beq\label{eq:U and V_real}
\begin{aligned}
\mathbbm{A}_{x1}\mathbbm{U} + \mathbbm{U} \mathbbm{A}_{x1}^\top +\mathbbm{B}_{1x1}\mathbbm{B}_{1x1}^\top =&\; 0, \\
\mathbbm{A}_{x1} \mathbbm{V} + \mathbbm{V} \mathbbm{A}_{x1}^\top  +\mathbbm{B}_{2x1}\mathbbm{B}_{2x1}^\top =&\; 0 ,
\end{aligned}
\eeq
subject to  $S-
\frac{T}{\gamma^2} >0$, $U-
\frac{V}{\gamma^2} >0$, $ \mathbbm{A}_{x2}\left(U-
\frac{V}{\gamma^2}\right)  + \left(U-
\frac{V}{\gamma^2}\right) \mathbbm{A}_{x2}^\top =0  $, and $\sigma_{\rm min} \left(  \mathbbm{S}-
\frac{\mathbbm{T}}{\gamma^2}
\right)  \sigma_{\rm min} \left(\mathbbm{U}-
\frac{\mathbbm{V}}{\gamma^2} \right) < \gamma^2 $.
\end{thm}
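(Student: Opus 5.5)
The plan is to reduce everything to Lemma~\ref{lem1}: from $\mathbbm{S},\mathbbm{T},\mathbbm{U},\mathbbm{V}$ I will build explicit $\mathbbm{X},\mathbbm{Y}>0$ solving the AREs \eqref{eq:ARE_0}, and then check the two stability conditions and the coupling condition $\rho(\mathbbm{X}\mathbbm{Y})<1$. Two structural facts drive the construction. First, \eqref{eq:ABCD_0} and the orthogonality of $\mathbbm{D}_{12},\mathbbm{D}_{21}$ give $\mathbbm{B}_1\mathbbm{B}_1^\top=\mathbbm{C}_2^\sharp\mathbbm{C}_2^{\sharp\top}$ and $\mathbbm{B}_2\mathbbm{B}_2^\top=\mathbbm{C}_1^\sharp\mathbbm{C}_1^{\sharp\top}$. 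Second, \eqref{eq:oct27_A_x_y} gives $\mathbbm{A}_y=-\mathbbm{A}_x^\sharp=\mathbbm{J}_n\mathbbm{A}_x^\top\mathbbm{J}_n$. Together these mean that a congruence by $\mathbbm{J}_n$ turns \eqref{eq:Y_0} into an equation of the same type as \eqref{eq:X_0}, with the roles of the stable and antistable parts of $\mathbbm{A}_x$ exchanged and with $\gamma^2$ rescaling the weight.

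For $\mathbbm{X}$, I first note that if $\mathbbm{X}>0$ then \eqref{eq:X_0} is equivalent to the Lyapunov equation
\[
\mathbbm{A}_x\mathbbm{X}^{-1}+\mathbbm{X}^{-1}\mathbbm{A}_x^\top+\tfrac{1}{\gamma^2}\mathbbm{B}_1\mathbbm{B}_1^\top-\mathbbm{B}_2\mathbbm{B}_2^\top=0 .
\]
In the Schur coordinates \eqref{eq:tilde_A}--\eqref{eq:tilde_B}, the closed loop matrix $\mathbbm{A}_x+(\cdots)\mathbbm{X}=-\mathbbm{X}^{-1}\mathbbm{A}_x^\top\mathbbm{X}$ must be Hurwitz. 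That forces $\mathbbm{X}^{-1}$ to act only on the antistable block, so I take $\mathbbm{W}\mathbbm{X}^{-1}\mathbbm{W}^\top$ to be block diagonal with lower block $\mathbbm{S}-\mathbbm{T}/\gamma^2>0$. Subtracting $1/\gamma^2$ times the $\mathbbm{T}$ equation from the $\mathbbm{S}$ equation in \eqref{eq:S and T_real} gives exactly the (2,2) block of the Lyapunov equation above. Since a nonsingular $\mathbbm{X}$ is needed, I would instead work with $\mathbbm{X}$ solving the ARE only on the antistable invariant subspace, which is the usual stabilizing solution $\mathbbm{X}=\mathbbm{W}^\top\mathrm{diag}(0,(\mathbbm{S}-\mathbbm{T}/\gamma^2)^{-1})\mathbbm{W}$. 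I then check the (1,1) and (1,2) blocks: the (1,2) block of $\mathbbm{A}_x^\top\mathbbm{X}+\mathbbm{X}\mathbbm{A}_x$ vanishes because of the upper triangular structure. The Hurwitz property then follows because the closed loop is block triangular with diagonal blocks $\mathbbm{A}_{x1}$ and $-(\mathbbm{S}-\mathbbm{T}/\gamma^2)\mathbbm{A}_{x3}^\top(\mathbbm{S}-\mathbbm{T}/\gamma^2)^{-1}$, both stable.

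For $\mathbbm{Y}$, I apply the same reasoning to \eqref{eq:Y_0} after the $\mathbbm{J}_n$ congruence. There the stable block $\mathbbm{A}_{x1}$ of $\mathbbm{A}_x$ becomes the antistable part of $\mathbbm{A}_y^\top$-type dynamics. The candidate is $\mathbbm{Y}^{-1}$ built from $\mathbbm{U}-\mathbbm{V}/\gamma^2$, up to the $\gamma^2$ scaling, and \eqref{eq:U and V_real} supplies the diagonal block equation. Because the congruence turns the upper triangular form into a lower triangular one, the off-diagonal coupling $\mathbbm{A}_{x2}$ does not drop out automatically. This is exactly where the hypothesis $\mathbbm{A}_{x2}(\mathbbm{U}-\mathbbm{V}/\gamma^2)+(\mathbbm{U}-\mathbbm{V}/\gamma^2)\mathbbm{A}_{x2}^\top=0$ enters: it kills the cross term so that the block candidate solves the full equation. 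I expect this bookkeeping, tracking how $\mathbbm{W}$ and $\mathbbm{J}_n$ interact and getting the dimensions of the blocks consistent, to be the main obstacle. If $\mathbbm{W}$ does not commute suitably with $\mathbbm{J}_n$, I would fall back to working directly in $\mathbbm{W}$ coordinates with $\mathbbm{W}\mathbbm{J}_n\mathbbm{W}^\top$ carried along.

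Finally, I treat the coupling condition. Lemma~\ref{lem1} asks for $\mathbbm{X},\mathbbm{Y}>0$, so where the block construction gives only semidefinite solutions I would argue via a standard limiting/perturbation or rank argument, as in \cite{PAJ91}, or interpret positivity on the relevant invariant subspaces. Here $\rho(\mathbbm{X}\mathbbm{Y})$ is bounded by $\|\mathbbm{X}\|\,\|\mathbbm{Y}\|$. With $\mathbbm{X}$ and $\mathbbm{Y}$ inverses of $\mathbbm{S}-\mathbbm{T}/\gamma^2$ and $\gamma^{-2}$-scaled $\mathbbm{U}-\mathbbm{V}/\gamma^2$, this bound becomes $\gamma^2/[\sigma_{\min}(\mathbbm{S}-\mathbbm{T}/\gamma^2)\,\sigma_{\min}(\mathbbm{U}-\mathbbm{V}/\gamma^2)]$. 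I will match this against the singular value hypothesis in the statement, taking care with the exact direction of the inequality and the $\gamma^2$ scaling, which the theorem presumably arranges to give $\rho<1$. With all conditions of Lemma~\ref{lem1} verified, the controller exists, and physical realizability is ensured by \eqref{eq:dec16_BK_CK} as in \cite{JNP08}.
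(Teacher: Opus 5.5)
Your construction of $\mathbbm{X}$ is the paper's and is correct: the off-diagonal blocks vanish by triangularity, and the closed loop is block triangular with stable diagonal blocks. Your overall route through Lemma~\ref{lem1} is also the paper's route. The proposal breaks, however, at the step you leave unverified for $\mathbbm{Y}$. After the congruence $\hat{\mathbbm{Y}}=\gamma^2\mathbbm{J}^\top\mathbbm{Y}\mathbbm{J}$ and the change of coordinates by $\mathbbm{W}$ you obtain \eqref{eq:Y_tilde_real_dec18}. Take the block candidate $\tilde{\mathbbm{Y}}=\mathrm{diag}(\tilde{\mathbbm{Y}}_1,0)$ with $\tilde{\mathbbm{Y}}_1=(\mathbbm{U}-\mathbbm{V}/\gamma^2)^{-1}$. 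Its $(1,1)$ block is settled by \eqref{eq:U and V_real}. Its $(1,2)$ block, however, is $-\tilde{\mathbbm{Y}}_1\mathbbm{A}_{x2}$, and its $(2,1)$ block is the transpose. These are two separate conditions, of sizes $r_1\times r_2$ and $r_2\times r_1$ (with $r_1,r_2$ the sizes of $\mathbbm{A}_{x1},\mathbbm{A}_{x3}$). Since $\tilde{\mathbbm{Y}}_1$ is invertible, they force $\mathbbm{A}_{x2}=0$.

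The hypothesis $\mathbbm{A}_{x2}(\mathbbm{U}-\mathbbm{V}/\gamma^2)+(\mathbbm{U}-\mathbbm{V}/\gamma^2)\mathbbm{A}_{x2}^\top=0$ is what you get by adding those two blocks, which is what the paper does in \eqref{eq:dec17:12}. It is dimensionally meaningful only when $r_1=r_2$. Even then it does not force $\mathbbm{A}_{x2}=0$: as a matrix condition it allows $\mathbbm{U}-\mathbbm{V}/\gamma^2=I$ with $\mathbbm{A}_{x2}$ skew-symmetric. The underlying reason is that the kernel of a stabilizing solution of \eqref{eq:Y_tilde_real} must be the antistable invariant subspace of $\mathbbm{A}_x$. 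In $\mathbbm{W}$-coordinates this subspace is $\{(M\eta,\eta)\}$, where $M$ solves $\mathbbm{A}_{x1}M-M\mathbbm{A}_{x3}=-\mathbbm{A}_{x2}$, and it equals $\ker\tilde{\mathbbm{Y}}$ only if $\mathbbm{A}_{x2}=0$. So this step is valid only when $\mathbbm{A}_{x2}=0$, as in Corollary~\ref{cor:iff}. Otherwise you must build $\hat{\mathbbm{Y}}$ from a second real Schur form of $\mathbbm{A}_x$ ordered antistable-first, and then the Lyapunov data are no longer $\mathbbm{A}_{x1},\mathbbm{B}_{1x1},\mathbbm{B}_{2x1}$.

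The coupling step also cannot be closed as you intend. Your bound $\rho(\mathbbm{X}\mathbbm{Y})\le\|\mathbbm{X}\|\,\|\mathbbm{Y}\|$ is a genuine simplification of the paper's argument through $\mathbbm{Z}$ and its skew-Hamiltonian blocks, which tacitly needs $r_1=r_2=n$ for $\mathbbm{J}\tilde{\mathbbm{Y}}\mathbbm{J}^\top$ to take the displayed form. But the correct scaling is $\|\mathbbm{X}\|=1/\sigma_{\rm min}(\mathbbm{S}-\mathbbm{T}/\gamma^2)$ and $\|\mathbbm{Y}\|=1/(\gamma^2\sigma_{\rm min}(\mathbbm{U}-\mathbbm{V}/\gamma^2))$. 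So what you actually obtain is $\rho(\mathbbm{X}\mathbbm{Y})<1$ whenever $\sigma_{\rm min}(\mathbbm{S}-\mathbbm{T}/\gamma^2)\,\sigma_{\rm min}(\mathbbm{U}-\mathbbm{V}/\gamma^2)>\gamma^{-2}$, which is the paper's \eqref{eq:suff_XY}.

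The statement's hypothesis is instead an upper bound, $<\gamma^2$, on the same product. That gives a lower bound on $\|\mathbbm{X}\|\,\|\mathbbm{Y}\|$ and no control of $\rho(\mathbbm{X}\mathbbm{Y})$. The paper reaches it only through the passage from \eqref{eq:suff_XY} to \eqref{eq:suff_XY_2}, which overlooks that $\sigma_{\max}(M^{-1})=1/\sigma_{\min}(M)$ reverses the inequality. So your plan to match the bound to the stated condition ``taking care with the direction'' cannot succeed; the condition itself has to be replaced by the lower bound.

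Finally, your $\mathbbm{X}$ and $\mathbbm{Y}$ are singular, so they do not meet the strict positivity required in Lemma~\ref{lem1} as written; the paper ignores this as well. A limiting or perturbation argument will not produce positive definite solutions of the same equations, because the stabilizing solution is unique. This has to be settled at the level of the lemma, by using the standard form of the two-Riccati result stated for stabilizing solutions $\mathbbm{X},\mathbbm{Y}\ge 0$.
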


\emph{Proof.} The proof follows the procedure in Refs. \cite{P89,PAJ91}. Notice that Eq. \eqref{eq:X_0} is of the same form as the equation (ARE) in \cite{P89}, while Eq. \eqref{eq:Y_0} is not.  However, the replacement  $\hat{\mathbbm{Y} } =\gamma^2   \mathbbm{J} ^\top \mathbbm{Y}   \mathbbm{J}$ converts Eq. \eqref{eq:Y_0} to
\beq\label{eq:Y_tilde_real}
-\mathbbm{A}_x^\top  \hat{\mathbbm{Y} }  - \hat{\mathbbm{Y} }\mathbbm{A}_x +\hat{\mathbbm{Y} } (\frac1{\gamma^2}  \mathbbm{B}_2 \mathbbm{B}_2^\top -\mathbbm{B}_1 \mathbbm{B}_1^\top)\hat{\mathbbm{Y} } =0.
\eeq
which is of the same form as Eq. \eqref{eq:X_0}. 
%
Denote
\beq\label{eq:tilde_XY}
\tilde{\mathbbm{X}} = \mathbbm{W}\mathbbm{X}\mathbbm{W}^\top, ~  \tilde{\mathbbm{Y}} = \mathbbm{W}\hat{\mathbbm{Y}} \mathbbm{W}^\top =\gamma^2   \mathbbm{W}\mathbbm{J} ^\top \mathbbm{Y}   \mathbbm{J} \mathbbm{W}^\top,
\eeq
where the matrix $\mathbbm{W}$ is that in Eq. \eqref{eq:tilde_A}. Then the problem of finding $\mathbbm{X}$ and $\mathbbm{Y}$ becomes finding $\tilde{\mathbbm{X}}$ and $\tilde{\mathbbm{Y}}$. We look at Eq. \eqref{eq:X_0} first. For the anti-stable matrix $\mathbbm{A}_{x3}$, there exist $\mathbbm{S},\mathbbm{T}\geq0$ which solve the Lyapunov equations \eqref{eq:S and T_real}. Assume $S-
\frac{T}{\gamma^2} >0$ and define
\beq \label{eq:X_dec17l}
\tilde{\mathbbm{X}}  = \left[
\bey{cc}
0 & 0 \\
0 & \left(\mathbbm{S}-
\frac{\mathbbm{T}}{\gamma^2}
\right)^{-1}
\eey
\right].
\eeq
Clearly, the resulting $\mathbbm{X} = W^\top\tilde{\mathbbm{X}} W^\top $ solves the ARE \eqref{eq:X_0}. Next, we look at Eq. \eqref{eq:Y_0}. By Eqs. \eqref{eq:tilde_A}-\eqref{eq:tilde_XY},  Eq. \eqref{eq:Y_tilde_real} becomes
\beq\label{eq:Y_tilde_real_dec18} 
-\tilde{\mathbbm{A}}_x^\top \tilde{\mathbbm{Y}} -\tilde{\mathbbm{Y}} \tilde{\mathbbm{A}}_x+ \tilde{\mathbbm{Y} } (\frac1{\gamma^2}  \tilde{\mathbbm{B}}_2 \tilde{\mathbbm{B}}_2^\top -\tilde{\mathbbm{B}}_1 \tilde{\mathbbm{B}}_1^\top)\tilde{\mathbbm{Y} }=0.
\eeq
Assume  $\tilde{\mathbbm{Y}} $ is of the form
\beq
\tilde{\mathbbm{Y}}  = \left[
\bey{cc}
\tilde{\mathbbm{Y}} _1 & 0\\
0 &0
\eey
\right].
\eeq
Then Eq. \eqref{eq:Y_tilde_real_dec18}   is equivalent to \eqref{eq:dec17:11}-\eqref{eq:dec17:12} 
\begin{subequations}
\beqn
&& -\mathbbm{A}_{x1}^\top \tilde{\mathbbm{Y}} _1  -\tilde{\mathbbm{Y}} _1  \mathbbm{A}_{x1}
\nonumber
\\
&& \quad \quad  \quad +\tilde{\mathbbm{Y}} _1  \left( \frac1{\gamma^2} \mathbbm{B}_{2x1} \mathbbm{B}_{2x1}^\top -\mathbbm{B}_{1x1} \mathbbm{B}_{1x1}^\top  \right)=0, 
\label{eq:dec17:11}\\
&&\mathbbm{A}_{x2}^\top \tilde{\mathbbm{Y}} _1  + \tilde{\mathbbm{Y}} _1  \mathbbm{A}_{x2} = 0.
\label{eq:dec17:12}
\eeqn
\end{subequations}
For the anti-stable matrix $-\mathbbm{A}_{x1}$,   there exist $\mathbbm{U},\mathbbm{V}\geq0$ which solve the  Lyapunov equations \eqref{eq:U and V_real}.  If $U-
\frac{V}{\gamma^2} >0$, then $\tilde{\mathbbm{Y}}_1 =\left(\mathbbm{U}-
\frac{\mathbbm{V}}{\gamma^2}
\right)^{-1} $ solves  Eq. \eqref{eq:dec17:11}.  Moreover, if $ \mathbbm{A}_{x2}\left(U-
\frac{V}{\gamma^2}\right)  + \left(U-
\frac{V}{\gamma^2}\right) \mathbbm{A}_{x2}^\top =0  $, then $\tilde{\mathbbm{Y}}_1$ satisfies Eq. \eqref{eq:dec17:12}.


In the following, we look at $\rho(\mathbbm{X}\mathbbm{Y})<1$ in Lemma \ref{lem1}. By Eq. \eqref{eq:tilde_XY},  
\beq  \label{eq:XY_dec23}
\begin{aligned}
 \mathbbm{X}\mathbbm{Y} =& \frac1{\gamma^2} \mathbbm{W}^\top \tilde{\mathbbm{X}}\mathbbm{W} \mathbbm{J}  \mathbbm{W}^\top \tilde{\mathbbm{Y}}  \mathbbm{W} \mathbbm{J}^\top
\\
=&   \frac1{\gamma^2} \mathbbm{W}^\top \tilde{\mathbbm{X}}\mathbbm{W} \mathbbm{J}  \mathbbm{W}^\top \mathbbm{J}^\top \mathbbm{J} \tilde{\mathbbm{Y}}\mathbbm{J}^\top \mathbbm{J}  \mathbbm{W} \mathbbm{J}^\top.
\end{aligned}
\eeq
Noticing
\beqm
\tilde{\mathbbm{X}} =\left[
\bey{cc}
0 & 0 \\
0 & \left(\mathbbm{S}-
\frac{\mathbbm{T}}{\gamma^2}
\right)^{-1}
\eey
\right], \quad 
 \mathbbm{J} \tilde{\mathbbm{Y}}\mathbbm{J}^\top
=
\left[
\bey{cc}
0& 0 \\
0 &   \left(\mathbbm{U}-
\frac{\mathbbm{V}}{\gamma^2}
\right)^{-1} 
\eey 
\right],
\eeqm
we have
\beqm
\begin{aligned}
\mathbbm{X}\mathbbm{Y} =&\; \frac1{\gamma^2} \mathbbm{W}^\top \left[
\bey{cc}
0 & 0 \\
0 & \left(\mathbbm{S}-
\frac{\mathbbm{T}}{\gamma^2}
\right)^{-1}
\eey
\right]  \mathbbm{W} \mathbbm{J}  \mathbbm{W}^\top \mathbbm{J}^\top 
\\
& \times  \left[
\bey{cc}
0& 0 \\
0 &   \left(\mathbbm{U}-
\frac{\mathbbm{V}}{\gamma^2}
\right)^{-1} 
\eey 
\right]  \mathbbm{J}  \mathbbm{W} \mathbbm{J}^\top .
\end{aligned}
\eeqm
Define a matrix 
\beq \label{eq:Z_dec23}
\mathbbm{Z} =  \mathbbm{J}  \mathbbm{W} \mathbbm{J}^\top \mathbbm{W}^\top=    (\mathbbm{W}^\top)^\sharp \mathbbm{W}^\top,
\eeq
which is  skew-Hamiltonian and real orthogonal. 
Then
\beq \label{eq:rho_XYZ}
\begin{aligned}
&\rho(\mathbbm{X}\mathbbm{Y} ) 
\\
=&\; \frac1{\gamma^2} \rho\left(  \mathbbm{Z} \left[
\bey{cc}
0 & 0 \\
0 & \left(\mathbbm{S}-
\frac{\mathbbm{T}}{\gamma^2}
\right)^{-1}
\eey
\right] \mathbbm{Z}^\top  \left[
\bey{cc}
0& 0 \\
0 &   \left(\mathbbm{U}-
\frac{\mathbbm{V}}{\gamma^2}
\right)^{-1} 
\eey 
\right] \right).
\end{aligned}
\eeq
As $\mathbbm{Z}$ is a skew-Hamiltonian matrix \cite[Sec. 7.8]{GvL13}, it has the form 
\beq \label{eq:Z_dec23b}
\mathbbm{Z} = \left[ 
\bey{cc}
\mathbbm{Z}_1 & \mathbbm{Z}_2 \\
\mathbbm{Z}_3  & \mathbbm{Z}_1^\top
\eey
\right],
\eeq
where $\mathbbm{Z}_2$ and $\mathbbm{Z}_3$ are skew-symmetric.  By means of Eq. \eqref{eq:Z_dec23b} we can rewrite Eq. \eqref{eq:rho_XYZ} as
\beqm
\rho(\mathbbm{X}\mathbbm{Y} ) = \frac1{\gamma^2} \rho\left( 
\left[
\bey{cc}
0 & \mathbbm{Z}_2   \left(\mathbbm{S}-
\frac{\mathbbm{T}}{\gamma^2}
\right)^{-1} \mathbbm{Z}_1  \left(\mathbbm{U}-
\frac{\mathbbm{V}}{\gamma^2}
\right)^{-1}\\
0  & \mathbbm{Z}_1^\top  \left(\mathbbm{S}-
\frac{\mathbbm{T}}{\gamma^2}
\right)^{-1} \mathbbm{Z}_1  \left(\mathbbm{U}-
\frac{\mathbbm{V}}{\gamma^2}
\right)^{-1}
\eey
\right]
 \right),
\eeqm
which means
\beq \label{eq:XY_dec23_temp2}
\begin{aligned}
& \rho(\mathbbm{X}\mathbbm{Y} ) <1  
\\
 \Longleftrightarrow &
\rho\left(\mathbbm{Z}_1^\top  \left(\mathbbm{S}-
\frac{\mathbbm{T}}{\gamma^2}
\right)^{-1} \mathbbm{Z}_1  \frac1{\gamma^2} \left(\mathbbm{U}-
\frac{\mathbbm{V}}{\gamma^2}
\right)^{-1}
 \right)<1 .
 \end{aligned}
\eeq
If $ \left(\mathbbm{S}-
\frac{\mathbbm{T}}{\gamma^2}
\right)^{-1}>0$ and  $\left(\mathbbm{U}-
\frac{\mathbbm{V}}{\gamma^2}
\right)^{-1}>0$,   Eq. \eqref{eq:XY_dec23_temp2} can be rewritten as 
\begin{align}
&\rho(\mathbbm{X}\mathbbm{Y} ) <1  
\label{eq:XY_dec23_temp3}
\\
 \Longleftrightarrow& 
\rho\left(\mathbbm{F}^\top  \left(\mathbbm{S}-
\frac{\mathbbm{T}}{\gamma^2}
\right)^{-1} \mathbbm{F} \frac1{\gamma^2} \left(\mathbbm{U}-
\frac{\mathbbm{V}}{\gamma^2}
\right)^{-1}
\right)<1,
\nonumber
\end{align}
where $ \left(\mathbbm{S}-
\frac{\mathbbm{T}}{\gamma^2}
\right)^{-1}$ and  $\left(\mathbbm{U}-
\frac{\mathbbm{V}}{\gamma^2}
\right)^{-1}$ are transformed to  their SVD form, and  the resulting matrix $\mathbbm{F}$ satisfies $\mathbbm{F}^\top \mathbbm{F}\leq I$, and  $\mathbbm{F} \mathbbm{F}^\top\leq I$.  Clearly, $\mathbbm{F}$ satisfying $\mathbbm{F}^\top \mathbbm{F}\leq I$ and  $\mathbbm{F} \mathbbm{F}^\top\leq I$ is related to $\mathbbm{Z}_1$ and thus is  not arbitrary. If we let  $\mathbbm{F}$ be an arbitrary matrix that satisfies $\mathbbm{F}^\top \mathbbm{F}\leq I$ and  $\mathbbm{F} \mathbbm{F}^\top\leq I$, then  Eq. \eqref{eq:XY_dec23_temp3}  is equivalent to
\begin{align}
&\rho(\mathbbm{X}\mathbbm{Y} ) <1
\la{eq:jan2_iff}
\\
   \Longleftrightarrow  &
\sigma_{\rm max} \left(  \left(\mathbbm{S}-
\frac{\mathbbm{T}}{\gamma^2}
\right)^{-1} \right)  \sigma_{\rm max} \left( \left(\mathbbm{U}-
\frac{\mathbbm{V}}{\gamma^2}
\right)^{-1} \right) < \gamma^2.
\nonumber
\end{align}
Consequently, a sufficient condition is 
\begin{align}
&\sigma_{\rm max} \left(  \left(\mathbbm{S}-
\frac{\mathbbm{T}}{\gamma^2}
\right)^{-1} \right)  \sigma_{\rm max} \left( \left(\mathbbm{U}-
\frac{\mathbbm{V}}{\gamma^2}
\right)^{-1} \right) < \gamma^2
\nonumber
\\
\Longrightarrow&
\rho (\mathbbm{X}\mathbbm{Y})<1 ,
\label{eq:suff_XY}
\end{align}
which is equivalent to 
\beq \label{eq:suff_XY_2}
\sigma_{\rm min} \left(  \mathbbm{S}-
\frac{\mathbbm{T}}{\gamma^2}
 \right)  \sigma_{\rm min} \left(\mathbbm{U}-
\frac{\mathbbm{V}}{\gamma^2} \right) < \gamma^2 
\Longrightarrow
\rho (\mathbbm{X}\mathbbm{Y})<1 .
\eeq
(Notice that the LHS of Eq. \eqref{eq:suff_XY_2} is the condition in the statement of the theorem.) \hfill $\Box$

\begin{rem}{\rm
Two AREs are needed in the standard approach, as given in Refs. \cite{PAJ91,JNP08}. Here, we need only to solve at most four Lyapunov equations which are linear equations.  Moreover,   if $A_x$ is  either stable or anti-stable, solving two Lyapunov equations is sufficient. An example is given in Subsection \ref{subsec:case 1}.
}
\end{rem}

If both $\Omega_-$ and $\Omega_+$ are purely imaginary while both $C_-$ and $C_+$ are real, or if all four parameters --- $\Omega_-$, $\Omega_+$, $C_-$, and $C_+$--- are purely imaginary, it is straightforward to verify that the matrix $\mathbbm{A}$ in Eq. \eqref{eq:real_sys_ABCD} is symmetric. This property arises in many quantum linear systems, such as empty cavities (Section \ref{sec:cavity} below), degenerate parametric amplifiers (Section \ref{sec:dpa} below), rotated non-degenerate parametric amplifiers \cite[Section 1.5.2]{NY17}, Ref. \cite{SY21}, and the model studied in Ref. \cite{BQND24} (if the perturbation term is neglected). For such systems,  the sufficient condition in Theorem \ref{thm_general} can be strengthened to the following necessary and sufficient condition.

\begin{cor}\label{cor:iff}
Assume  Eq. \eqref{eq:A3} holds.   If the matrix $\mathbbm{A}_x$ is symmetric and  the coordinates transformation matrix $\mathbbm{Z}$ in Eq. \eqref{eq:Z_dec23} is an identity matrix, then there exists a quantum controller $K$ such that the closed-loop feedback system in  Fig. \ref{fig_cl} is internally stable and achieves disturbance attenuation  $T_{\mbf{w}_1\to \mbf{z}} <\gamma$ if and only if   there are matrices $\mathbbm{S},\mathbbm{T},\mathbbm{U},\mathbbm{V}>0$ which solve the Lyapunov equations   \eqref{eq:S and T_real} and \eqref{eq:U and V_real},  and satisfy  $\mathbbm{S}-
\frac{\mathbbm{T}}{\gamma^2} >0$, $\mathbbm{U}-
\frac{\mathbbm{V}}{\gamma^2} >0$,   and $\sigma_{\rm min} \left(  \mathbbm{S}-
\frac{\mathbbm{T}}{\gamma^2}
\right)  \sigma_{\rm min} \left(\mathbbm{U}-
\frac{\mathbbm{V}}{\gamma^2} \right) < \gamma^2 $.
\end{cor}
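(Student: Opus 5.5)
The plan is to run the argument of Theorem \ref{thm_general} again, checking that under the two extra hypotheses every implication used there can be reversed. Sufficiency is then exactly Theorem \ref{thm_general}. The only condition in that theorem that does not appear in the corollary is the coupling constraint $\mathbbm{A}_{x2}(\mathbbm{U}-\mathbbm{V}/\gamma^2)+(\mathbbm{U}-\mathbbm{V}/\gamma^2)\mathbbm{A}_{x2}^\top=0$. This constraint is automatic here. Since $\mathbbm{A}_x$ is symmetric, its real Schur form \eqref{eq:tilde_A} can be chosen as an orthogonal diagonalization. The triangular form $\tilde{\mathbbm{A}}_x$ is then symmetric, so $\mathbbm{A}_{x2}=0$.

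For necessity, I would start from Lemma \ref{lem1}. It gives stabilizing solutions $\mathbbm{X},\mathbbm{Y}>0$ of \eqref{eq:ARE_0} with $\rho(\mathbbm{X}\mathbbm{Y})<1$. The key step is to show that these stabilizing solutions must be the block-structured ones built in the proof of Theorem \ref{thm_general}.

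\emph{Step 1: identify $\tilde{\mathbbm{X}}$.} With $\mathbbm{A}_{x2}=0$, the Hamiltonian matrix of \eqref{eq:X_0} decouples along the stable/anti-stable splitting. The ARE has at most one stabilizing solution. I would check directly that the matrix $\tilde{\mathbbm{X}}$ in \eqref{eq:X_dec17l} is stabilizing. The closed-loop matrix is block triangular, with diagonal blocks $\mathbbm{A}_{x1}$ (stable) and
$$\mathbbm{A}_{x3}+\bigl(\tfrac1{\gamma^2}\mathbbm{B}_{1x2}\mathbbm{B}_{1x2}^\top-\mathbbm{B}_{2x2}\mathbbm{B}_{2x2}^\top\bigr)\bigl(\mathbbm{S}-\tfrac{\mathbbm{T}}{\gamma^2}\bigr)^{-1}.$$
Using the Lyapunov equations \eqref{eq:S and T_real}, the second block is similar to $-\bigl(\mathbbm{S}-\tfrac{\mathbbm{T}}{\gamma^2}\bigr)\mathbbm{A}_{x3}^\top\bigl(\mathbbm{S}-\tfrac{\mathbbm{T}}{\gamma^2}\bigr)^{-1}$, which is stable.

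\emph{Step 2: the converse direction of Step 1.} Conversely, the existence of a stabilizing $\mathbbm{X}\ge0$ forces $\mathbbm{S}-\mathbbm{T}/\gamma^2$ to be invertible. The $(2,2)$ block of $\tilde{\mathbbm{X}}$ is then $(\mathbbm{S}-\mathbbm{T}/\gamma^2)^{-1}$, because the inverse of that block solves the Lyapunov equation satisfied by $\mathbbm{S}-\mathbbm{T}/\gamma^2$. The $(1,1)$ block is $0$. Hence $\mathbbm{X}\ge0$ gives $\mathbbm{S}-\mathbbm{T}/\gamma^2>0$.

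\emph{Step 3: the dual argument for $\mathbbm{Y}$.} The same reasoning applied to \eqref{eq:Y_tilde_real_dec18} identifies $\tilde{\mathbbm{Y}}$ with the block form $\mathrm{diag}\{(\mathbbm{U}-\mathbbm{V}/\gamma^2)^{-1},0\}$, and gives $\mathbbm{U}-\mathbbm{V}/\gamma^2>0$.

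\emph{Step 4: the spectral-radius condition.} With $\mathbbm{Z}=I$, the reduction \eqref{eq:XY_dec23_temp2} has $\mathbbm{Z}_1=I$, so
$$\rho(\mathbbm{X}\mathbbm{Y})=\gamma^{-2}\rho\Bigl(\bigl(\mathbbm{S}-\tfrac{\mathbbm{T}}{\gamma^2}\bigr)^{-1}\bigl(\mathbbm{U}-\tfrac{\mathbbm{V}}{\gamma^2}\bigr)^{-1}\Bigr).$$
I would then use that $\mathbbm{A}_x$ is symmetric and diagonalized. The Lyapunov equations then have explicit solutions in the eigenbasis, and I would try to show that $\mathbbm{S}-\mathbbm{T}/\gamma^2$ and $\mathbbm{U}-\mathbbm{V}/\gamma^2$ can be taken diagonal, or at least commuting, in a common basis. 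Once they commute, the spectral radius of the product equals the product of the extreme singular values. This turns $\rho(\mathbbm{X}\mathbbm{Y})<1$ into the stated $\sigma_{\rm min}$ condition, and the equivalence \eqref{eq:jan2_iff} holds exactly rather than only as the one-sided bound \eqref{eq:suff_XY}.

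The main obstacle is Step 4. Without commutation, $\rho(PQ)$ is not simply $\sigma_{\max}(P)\sigma_{\max}(Q)$. The plan is to extract the needed alignment from $\mathbbm{Z}=I$, which makes $\mathbbm{W}$ symplectic, together with the symmetry of $\mathbbm{A}_x$. Failing that, I would argue that the hypotheses are meant to cover scalar-block cases such as those in Sections \ref{sec:cavity}--\ref{sec:dpa}, where the blocks are multiples of identity and the equivalence is immediate.
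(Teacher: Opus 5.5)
Your overall plan is the paper's: sufficiency from Theorem~\ref{thm_general} with $\mathbbm{A}_{x2}=0$, and necessity by reversing the chain once $\mathbbm{Z}=I$. Your Steps 1--3 also supply something the paper's two-line proof omits, namely that the stabilizing solutions from Lemma~\ref{lem1} must be the block-structured ones. Your direct check does this: the closed-loop matrix is block triangular with diagonal blocks $\mathbbm{A}_{x1}$ and $-P\mathbbm{A}_{x3}^\top P^{-1}$, where $P:=\mathbbm{S}-\mathbbm{T}/\gamma^2$. (Your remark that the Hamiltonian matrix ``decouples'' is not right in general, because $\gamma^{-2}\mathbbm{B}_{1x1}\mathbbm{B}_{1x2}^\top-\mathbbm{B}_{2x1}\mathbbm{B}_{2x2}^\top$ need not vanish, but you never use it.)

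The genuine gap is Step~4, and neither of your ways of closing it works. Put $Q:=\mathbbm{U}-\mathbbm{V}/\gamma^2$. Commutation of $P$ and $Q$ is not enough. In a common eigenbasis $\rho(P^{-1}Q^{-1})=1/\min_i p_iq_i$, and this is strictly smaller than $1/(\sigma_{\min}(P)\sigma_{\min}(Q))$ whenever no common eigenvector carries both smallest eigenvalues (e.g.\ $P=\mathrm{diag}(1,2)$, $Q=\mathrm{diag}(2,1)$). Nothing in the hypotheses aligns $P$ and $Q$: $\mathbbm{Z}=I$ only gives $\mathbbm{Z}_1=I$, while $P$ and $Q$ come from different blocks of $\tilde{\mathbbm{A}}_x$, $\tilde{\mathbbm{B}}_1$, $\tilde{\mathbbm{B}}_2$. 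The paper does not fill this step either. It simply asserts that $\mathbbm{Z}=I$ makes \eqref{eq:jan2_iff} an equivalence, although $\mathbbm{F}$ is then a fixed orthogonal matrix, not an arbitrary contraction.

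More seriously, even where alignment holds (e.g.\ scalar blocks) your identity gives $\rho(\mathbbm{X}\mathbbm{Y})=1/(\gamma^2\sigma_{\min}(P)\sigma_{\min}(Q))$. So $\rho(\mathbbm{X}\mathbbm{Y})<1$ is equivalent to $\sigma_{\min}(P)\sigma_{\min}(Q)>\gamma^{-2}$, not to $\sigma_{\min}(P)\sigma_{\min}(Q)<\gamma^{2}$. Since $\sigma_{\max}(P^{-1})=1/\sigma_{\min}(P)$, the passage from \eqref{eq:suff_XY} to \eqref{eq:suff_XY_2} reverses the inequality. Hence your claim that the equivalence is ``immediate'' for scalar blocks fails for the statement as written.

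Concretely, take the DPA of Section~\ref{sec:dpa} with $\kappa_u=2$, $\kappa_w=1$, $\epsilon=2$ (so $\kappa_w<\kappa_u<\epsilon+\kappa_w$), and take $\mathbbm{W}=\mathbbm{J}_1$, for which $\mathbbm{Z}=I$ exactly. Then $\mathbbm{S}=2/3$, $\mathbbm{T}=1/3$, $\mathbbm{U}=1$, $\mathbbm{V}=2$. At $\gamma^2=11/5$ one has $P=17/33>0$, $Q=1/11>0$ and $PQ<\gamma^2$, so every condition in the statement holds. Yet the unique stabilizing solutions of \eqref{eq:ARE_0} are $\mathbbm{X}=\mathrm{diag}(33/17,0)$ and $\mathbbm{Y}=\mathrm{diag}(5,0)$, giving $\rho(\mathbbm{X}\mathbbm{Y})=165/17>1$. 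By Lemma~\ref{lem1}, no controller exists. This refutes exactly the direction you import from Theorem~\ref{thm_general}, so citing that theorem does not settle sufficiency either.

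What your Steps 1--4 actually yield is the criterion $P>0$, $Q>0$, $\rho(P^{-1}Q^{-1})<\gamma^2$. This uses, as \eqref{eq:XY_dec23_temp2} silently does, that the stable and anti-stable blocks are both $n\times n$. The condition $\sigma_{\min}(P)\sigma_{\min}(Q)>\gamma^{-2}$ is sufficient for this criterion. It is also necessary when the smallest eigenvalues of $P$ and $Q$ share an eigenvector, in particular for scalar blocks.
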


\textbf{Proof.} If the matrix $\mathbbm{A}_x$ is symmetric, it is diagonalizable. In this case, $\mathbbm{A}_{x2}=0$ and therefore  Eq. \eqref{eq:dec17:12} vanishes. On the other hand,  if the coordinates transformation matrix $\mathbbm{Z}$ in Eq. \eqref{eq:Z_dec23} is an identity matrix, then  Eq. \eqref{eq:jan2_iff} holds, and Eq. \eqref{eq:suff_XY_2} becomes 
\beq \label{eq:suff_XY_jan2}
\sigma_{\rm min} \left(  \mathbbm{S}-
\frac{\mathbbm{T}}{\gamma^2}
\right)  \sigma_{\rm min} \left(\mathbbm{U}-
\frac{\mathbbm{V}}{\gamma^2} \right) < \gamma^2 
\Longleftrightarrow
\rho (\mathbbm{X}\mathbbm{Y})<1 .
\eeq
The result follows.  $\hfill$ $\Box$

Corollary \ref{cor:iff} applies to the passive case to be studied in Section \ref{sec:passive} and the  DPA  example in Subsection \ref{subsec:case 1}.

After obtaining $\mathbbm{X}$ and $\mathbbm{Y}$ in Eq. \eqref{eq:XY_dec23} that solve the AREs \eqref{eq:ARE_0},  we can construct the quantum controller $K$.  Specifically,  like  \cite[Eq. (5)]{PAJ91} or  \cite[Eq. (30)]{JNP08}, the system matrices for the controller $K$ in Eq. \eqref{eq:controller_real} are
\beq\label{eq:controller_ABC_real}
\begin{aligned}
	\mathbbm{A}_K =&\;  \mathbbm{A}+\mathbbm{B}_2\mathbbm{C}_K -\mathbbm{B}_K \mathbbm{C}_2 + 
	\frac1{\gamma^2}(\mathbbm{B}_1-\mathbbm{B}_K \mathbbm{D}_{21}) \mathbbm{B}_1^\top \mathbbm{X} , \\
	\mathbbm{B}_K  =&\;  (I-\mathbbm{Y}\mathbbm{X})^{-1} (\gamma^2 \mathbbm{Y}\mathbbm{C}_2^\top +\mathbbm{B}_1\mathbbm{D}_{21}^\top),\\
	\mathbbm{C}_K =&\;  -(\mathbbm{B}_2^\top \mathbbm{X} +\mathbbm{D}_{12}^\top \mathbbm{C}_1),\\	
	\tilde{B}_K =&\; -C_K^\sharp, \quad \tilde{C}_K = -B_K^\sharp.
	\end{aligned}
\eeq

\begin{rem}{\rm
To ensure the physical realizability  of the controller $K$ in Eq. \eqref{eq:controller_real},   additional vacuum input fields may have to be added to make the controller genuinely quantum-mechanical.   However, this will not affect $\|T_{\mbf{w}_1\to \mbf{z}} \|_\infty $.  The interested reader may refer to Refs. \cite{JNP08,NJD09,VP16,UJ24} for more details. 
}
\end{rem}

\section{Coherent feedback  $H^\infty$ control: the passive case}\label{sec:passive}

In  this section, we assume both the plant $P$  and to-be-designed controller $K$   in Fig. \ref{fig_cl} are passive linear quantum systems.  For the passive case, under the rotations in Eq. \eqref{eq:rotation}, they are both of the form \eqref{eq:sys_passive_rotation}.

To be consistent with  notations used in Fig. \ref{fig_cl}, in this section we still adopt the  symbols in the real-quadrature representation. However, they  actually represent their counterparts in the  annihilation-creation operator representation. For example,  $\mbf{x}$ is actually $\mbf{a}$,   $\mbf{w}_1$ and $\mbf{u}$  are actually $\mbf{b}_{\rm in,1}$ and $\mbf{b}_{\rm in,2}$, respectively.

%

Similar to the development in Section \ref{sec:real}, consider the  plant $P$ in the annihilation-creation operator representation
\beq\label{eq: plant P}
\begin{aligned}
	\dot{\mbf{x}} =&\; A\mbf{x} + \frac1{\gamma}B_1 \mbf{w}_1 + B_2 \mbf{u}, \\
	\mbf{z}           =&\; C_1 \mbf{x} + D_{12} \mbf{u},\\
	\mbf{y}           =&\; C_2\mbf{x} +  \frac1{\gamma} D_{21}  \mbf{w}_1,
\end{aligned}
\eeq
where the constant system matrices are
\beqm
\begin{aligned}
A =&\;  -\frac1{2}C_1^\dag C_1 -\frac1{2}C_2^\dag C_2,\\
B_1 =&\;  -C_2^\dag D_{21},  ~~ B_2 = -C_1^\dag D_{12},
\end{aligned}
\eeqm
with $D_{12}$ and  $D_{21}$ being unitary. Thus, 
$E_1 = D_{12}^\dag D_{12}= I$ and $E_2 =\frac1{\gamma^2} D_{21}D_{21}^\dag= \frac1{\gamma^2}I$. We have
\beq \label{eq:AxAy_dec16}
\begin{aligned}
		A_x =&\; A-B_2 E_{1}^{-1}D_{12}^\dag C_1   =\frac1{2}C_1^\dag C_1 -\frac1{2}C_2^\dag C_2,\\
	A_y = & \;A-B_1 D_{21}^\dag E_2^{-1}C_2 =  -\frac1{2}C_1^\dag C_1 +\frac1{2}C_2^\dag C_2.
\end{aligned}
\eeq
Clearly,
\beq \label{eq:A_x_y_complex}
A_y = A_y^\dag = -A_x = -A_x^\dag.
\eeq
As a result,  Eq. \eqref{eq:A3A4} holds if and only if the matrix $A_x$ is nonsingular. 

Similar to Eq. \eqref{eq:ARE_0}, we consider the following AREs:
\begin{subequations} \label{eq:ARE}
	\beqn
	A_x^\dag X + XA_x + X(\frac1{\gamma^2}  B_1B_1^\dag -B_2B_2^\dag)X &=&0 , \label{eq:X}\\
	A_y Y + Y A_y^\dag + Y(C_1^\dag C_1 - \gamma^2 C_2^\dag C_2)Y &=& 0.  \label{eq:Y}
	\eeqn
\end{subequations}


%

Notice that Eq. \eqref{eq:X} is of the same form as the equation (ARE) in Ref.  \cite{P89}, while Eq. \eqref{eq:Y} is not. 
Let $\tilde{Y} = \gamma^2 Y$.  By means of Eq.\eqref{eq:A_x_y_complex}, Eq. \eqref{eq:Y} is converted to 
\beq\label{eq:Y_tilde_2}
	-A_x^\dag  \tilde{Y}  - \tilde{Y}A_x +\tilde{Y} (\frac1{\gamma^2}  B_2 B_2^\dag -B_1 B_1^\dag)\tilde{Y} =0,
\eeq
which is of the same form as Eq. \eqref{eq:X}.

Assume that the Hermitian matrix $A_x$  is nonsingular. By Schur decomposition we get
\beqm \label{eq:decomp}
A_x= \left[
\bey{cc}
A_{x1} & 0\\
0          & A_{x3}
\eey
\right], \  \ B_2 =  \left[
\bey{c}
B_{2x1}\\
 B_{2x2}
\eey
\right], \  \ B_1 =  \left[
\bey{c}
B_{1x1}\\
B_{1x2}
\eey
\right],
\eeqm
where $A_{x1}$ is stable and $A_{x3}$ is anti-stable.  For the anti-stable $A_{x3}$, there exist $S,T\geq0$ which solve the Lyapunov equations:
\beq\label{eq:S and T}
\begin{aligned}
-A_{x3} S - S A_{x3}^\dag +B_{2x2}B_{2x2}^\dag =&\; 0, \\
-A_{x3} T - T A_{x3}^\dag +B_{1x2}B_{1x2}^\dag =&\; 0.
\end{aligned}
\eeq
On the other hand, as $A_y = -A_x^\dag$, for the anti-stable  matrix $-A_{x1}$, there exist $U,V\geq0$ that solve the Lyapunov equations:
\beq\label{eq:U and V}
\begin{aligned}
	A_{x1} U + U \mathbbm{A}_{x1}^\dag +B_{1x1}B_{1x1}^\dag =&\; 0, \\
	A_{x1}^\dag V + V \mathbbm{A}_{x1}^\dag +B_{2x1}B_{2x1}^\dag =&\; 0.
\end{aligned}
\eeq
Define matrices
\beq \label{eq:XY}
X = \left[
\bey{cc}
0 & 0 \\
0 & \left(S-
\frac{T}{\gamma^2}
\right)^{-1}
\eey
\right], \ \ 
 \tilde{Y}  
 = 
\left[
\bey{cc}
\left(U-
\frac{V}{\gamma^2}
\right)^{-1}
& 0 \\
0 &  0
\eey
\right]. 
\eeq
Clearly,
\beq \label{eq:XY_complex}
\rho(XY)  = \frac1{\gamma^2}\rho (X\tilde{Y})=0.
\eeq

\begin{rem}{\rm 
In the passive case, Eq. \eqref{eq:dec17:12} vanishes. Moreover, due to the absence of $\mathbbm{J}$, the matrix $\mathbbm{Z}$ in Eq. \eqref{eq:Z_dec23} reduces to an identity matrix.
}
\end{rem} 

We are ready to present the main result of this section.
\begin{thm}\label{thm:passive}
Assume the  matrix $A_x$ in Eq. \eqref{eq:AxAy_dec16} is nonsingular.  There exists a quantum controller $K$ such that the closed-loop feedback system in  Fig. \ref{fig_cl} is internally stable and  achieves disturbance attenuation  $T_{\mbf{w}_1\to \mbf{z}} <\gamma$ if and only if there are $S,T,U,V>0$ which respectively solve the Lyapunov equations   \eqref{eq:S and T} and \eqref{eq:U and V}, and satisfy $S-
\frac{T}{\gamma^2} >0$ and $U-
\frac{V}{\gamma^2} >0$.
\end{thm}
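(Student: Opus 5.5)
The plan is to reduce Theorem \ref{thm:passive} to the complex analogue of Lemma \ref{lem1}, that is, \cite[Theorem 3.1]{PAJ91} applied to the passive plant \eqref{eq: plant P}. The rank conditions \eqref{eq:A3A4} hold because $A_x$ is nonsingular and Hermitian, so it has no imaginary-axis eigenvalues. Hence a controller with the stated properties exists if and only if there are stabilizing solutions $X,Y\geq 0$ of the AREs \eqref{eq:ARE} with $\rho(XY)<1$.

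The first step is to compute these solutions. Since $A_x$ is Hermitian, the Schur decomposition is a unitary diagonalization, so the off-diagonal block vanishes and Eq. \eqref{eq:dec17:12} disappears. For the ARE \eqref{eq:X}, I will use the structure theory of \cite{P89}. A Hermitian solution $X$ for which $A_x+(\frac1{\gamma^2}B_1B_1^\dag-B_2B_2^\dag)X$ is Hurwitz must vanish on the stable invariant subspace of $A_x$. On the anti-stable subspace, $X_3^{-1}$ must satisfy the Lyapunov equation
\[
-A_{x3}Z-ZA_{x3}^\dag+B_{2x2}B_{2x2}^\dag-\tfrac{1}{\gamma^2}B_{1x2}B_{1x2}^\dag=0,
\]
whose unique solution is $S-\frac{T}{\gamma^2}$ with $S,T$ as in \eqref{eq:S and T}. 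This Lyapunov equation is obtained by pre- and post-multiplying the block of \eqref{eq:X} by $X_3^{-1}$. Uniqueness follows because $A_{x3}$ is anti-stable. Thus the stabilizing solution is exactly $X$ in \eqref{eq:XY}, and $X\geq0$ requires $S-\frac{T}{\gamma^2}>0$. The case of a singular $X_3$ has to be excluded by a short argument. If the kernel of $X_3$ were nontrivial, the closed-loop matrix restricted to it would be the restriction of the anti-stable $A_{x3}$, which contradicts stabilizability. An analogous argument on the converted equation \eqref{eq:Y_tilde_2}, where $-A_x$ plays the role of $A_x$ and the anti-stable block is now $-A_{x1}$, shows that the stabilizing $\tilde Y=\gamma^2Y$ is the matrix in \eqref{eq:XY}. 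It is positive semidefinite precisely when $U-\frac{V}{\gamma^2}>0$.

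The next step is the coupling condition. Because $X$ and $\tilde Y$ live on complementary diagonal blocks in the same basis (Remark: $\mathbbm{Z}=I$), we have $XY=0$, as in \eqref{eq:XY_complex}. So $\rho(XY)=0<1$ holds automatically, and the coupling condition imposes nothing. This gives sufficiency directly: given the Lyapunov solutions with the positivity conditions, the matrices \eqref{eq:XY} satisfy every hypothesis of the lemma, and the controller is built from \eqref{eq:controller_ABC_real}. Necessity follows from the uniqueness of stabilizing solutions. If a controller exists, the stabilizing $X,Y$ exist, must coincide with \eqref{eq:XY}, and their positivity forces $S-\frac T{\gamma^2}>0$ and $U-\frac V{\gamma^2}>0$. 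Positivity of $S,T,U,V$ themselves should follow from controllability of the pairs $(A_{x3},B_{2x2})$ and so on, using the passive-system equivalences of Proposition \ref{prop} and \cite[Lemma 2]{GZ15}.

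I expect the main obstacle to be the strict positivity requirement. Lemma \ref{lem1} as stated asks for $X,Y>0$, yet the constructed $X$ and $\tilde Y$ are only positive semidefinite whenever both blocks are present. I would first try to resolve this by invoking the original \cite{PAJ91} condition, which requires only $X,Y\geq0$, and by noting that positive definiteness is needed only on the relevant subspaces. The second point needing care is showing that $T,V>0$ are genuinely implied when the positivity hypotheses fail, or else interpreting them as $\geq0$. A third is checking that the passive controller produced this way is physically realizable, possibly after adding vacuum channels as in the preceding remark.
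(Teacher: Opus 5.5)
Your route is the paper's. The paper gives Theorem~\ref{thm:passive} no separate proof; its argument is the development preceding it: unitary diagonalization of the Hermitian $A_x$, Lyapunov equations on the anti-stable blocks of $A_x$ and $-A_x$, $X$ and $\tilde Y$ supported on complementary blocks so that $\rho(XY)=0$, then Lemma~\ref{lem1}. Your necessity argument supplies what the paper leaves implicit: a stabilizing solution must annihilate the stable invariant subspace of $A_x$, its remaining block must be invertible, and its inverse is then fixed by uniqueness of the Lyapunov solution. Reading Lemma~\ref{lem1} with $X,Y\ge 0$, as in \cite{PAJ91,JNP08}, is forced rather than optional. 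For nonsingular Hermitian $A_x$, at least one of the stabilizing $X$, $\tilde Y$ always annihilates an eigenvector (e.g.\ $Y=0$ for the cavity of Section~\ref{sec:cavity}). With strict positivity the lemma would therefore rule out every passive plant.

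The step that fails is obtaining $S,T,U,V>0$ from controllability via Proposition~\ref{prop} and \cite[Lemma 2]{GZ15}. Those equivalences concern the plant pair $(A,[B_1\ B_2])$, not $(A_{x3},B_{1x2})$ or $(A_{x1},B_{2x1})$, and the latter can be uncontrollable. Take two modes with $C_1=[\sqrt{\kappa}\ \ 0]$, $C_2=[0\ \ \sqrt{\kappa}]$, $D_{12}=D_{21}=1$. Then $A=-\frac{\kappa}{2}I_2$ and $A_x=\mathrm{diag}(\frac{\kappa}{2},-\frac{\kappa}{2})$ is nonsingular. However, $B_1$ has no component on the anti-stable mode and $B_2$ none on the stable mode, so $T=V=0$ while $S=U=1$. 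Nevertheless $\mbf{z}$ depends only on $\mbf{u}$, and $\mbf{y}$ only on $\mbf{w}_1$. So any controller with $C_K=0$ and $A_K$ Hurwitz is internally stabilizing with $T_{\mbf{w}_1\to\mbf{z}}=0$ for every $\gamma>0$; the central controller indeed has $C_K=-(B_2^\dagger X+D_{12}^\dagger C_1)=0$ here.

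Hence the necessity direction, read literally with $T,V>0$, is false, and no argument can supply it. Your proof does establish the corrected equivalence: the unique solutions of the Lyapunov equations satisfy $S,T,U,V\ge 0$, and a controller exists if and only if $S-\frac{T}{\gamma^2}>0$ and $U-\frac{V}{\gamma^2}>0$. These inequalities already give $S>0$ and $U>0$, but $T$ and $V$ may vanish. Commit to that reading instead of trying to derive $T,V>0$.
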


%


\section{Examples} \label{sec:example}

In this section, we use two typical devices in quantum optics: an empty cavity and a DPA, to illustrate the theories proposed in Sections \ref{sec:real} an \ref{sec:passive}.

\subsection{Example 1: empty cavity} \label{sec:cavity}

\begin{figure}[h!]
	\centering
	\includegraphics[width=0.4\textwidth]{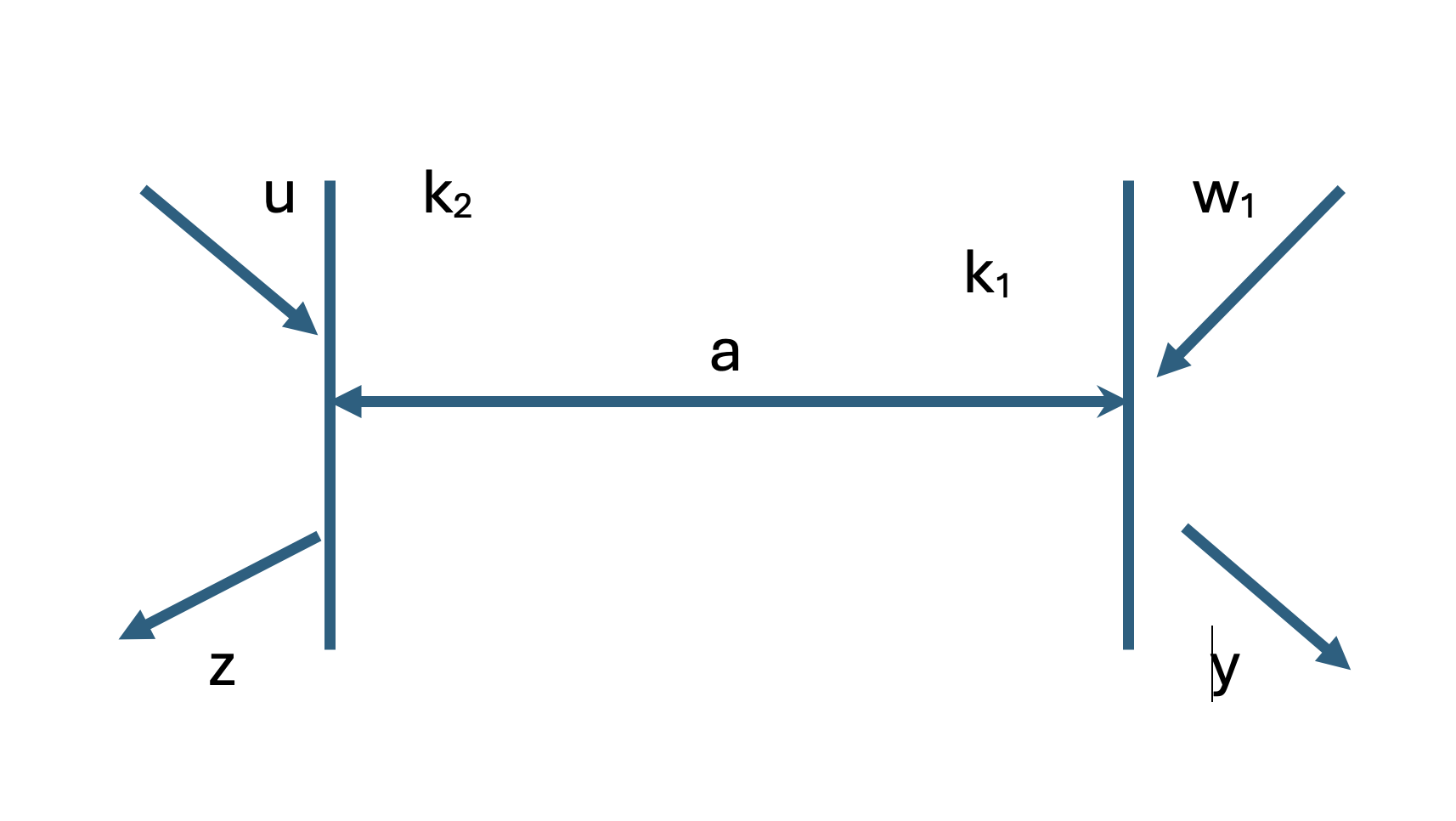}
	\caption{An empty cavity}
	\label{fig_cavity}
\end{figure}

Consider a single-mode cavity with two inputs, as shown in Fig. \ref{fig_cavity}, of the form
\beq \la{eq:K}
\begin{aligned}
	\dot{\mbf{a}} =&\; 
	- \frac{\kappa_1+\kappa_2}{2} \mbf{a}
	- \frac1{\gamma} \sqrt{\kappa_1} \mbf{w}_1 -\sqrt{\kappa_2} \mbf{u}, 
	\\
	\mbf{z}  =&\; 
	\sqrt{\kappa_2} \mbf{a} +  \mbf{u},
	\\
	\mbf{y} =&\; \sqrt{\kappa_1} \mbf{a}+ \frac1{\gamma} \mbf{w}_1.
	\end{aligned}
\eeq
By Eq. \eqref{eq:AxAy_dec16} we have $
A_x =  \frac{-\kappa_1+\kappa_2}{2}$ and $
A_y = \frac{\kappa_1-\kappa_2}{2} = -A_x$,
which confirm Eq. \eqref{eq:A_x_y_complex}.  In what follows, let $\kappa_2>\kappa_1$. Then $A_x>0$ and $A_y<0$.  By Eq. \eqref{eq:S and T}, we have 
$S = \frac{\kappa_2}{\kappa_2-\kappa_1}>0$ and $T=  \frac{\kappa_1}{\kappa_2-\kappa_1}>0$. 
Hence,
\beq \label{eq:gamma_cavity}
S-\frac{T}{\gamma^2}>0   \Longleftrightarrow  \gamma> \sqrt{\frac{\kappa_1}{\kappa_2}}. 
\eeq
Consequently,
\beqm
X = \left(S-\frac{T}{\gamma^2}\right)^{-1} = \frac{\kappa_2-\kappa_1}{\kappa_2-\frac{\kappa_1}{\gamma^2}} >0
\eeqm
in Eq. \eqref{eq:XY}. As $A_y<0$,  $Y=0$ is a solution to Eq. \eqref{eq:Y}. We have  $\rho(XY) =0$, 
which confirms Eq. \eqref{eq:XY_complex}. By Eq. \eqref{eq:controller_ABC_real},  we have
\beq\label{eq:ABC_K}
\begin{aligned}
&B_K =-\sqrt{\kappa_1}, \quad
C_K  = \frac{\sqrt{\kappa_2}\kappa_1 (1-\gamma^2)}{\gamma^2 \kappa_2-\kappa_1}, 
\\
&A_K =   \frac{-\kappa_2+\kappa_1}{2} - \frac{\kappa_2\kappa_1 (1-\gamma^2)}{\gamma^2 \kappa_2 -\kappa_1}.
\end{aligned}
\eeq

For physical realizability of the quantum controller,  we also need to check
\beq \label{eq:phys_real}
A_K + A_K^\dag + B_K B_K^\dag + C_K^\dag C_K=0.
\eeq
Solving Eq. \eqref{eq:phys_real}
we get
\beqm
\gamma_\pm = \sqrt{\frac{4 \left(\frac{\kappa _1}{\kappa _2}\right)^2\pm\frac{\kappa _1}{\kappa_2} \left(1-\frac{\kappa _1}{\kappa _2}\right) \sqrt{2 \left(1-\frac{\kappa _1}{\kappa _2}\right)}}{\left(\frac{\kappa _1}{\kappa _2}\right)^2+\frac{4 \kappa _1}{\kappa _2}-1}}.
\eeqm
We have the following three cases.
\bei

\item Case 1. When  $\kappa\equiv \frac{\kappa_1}{\kappa_2}>\sqrt{5}-2$, the solution is
\beqm
\gamma =  \sqrt{\frac{4 \kappa^2+ \kappa  (1-\kappa) \sqrt{2 (1- \kappa)}}{\kappa^2+4\kappa-1}} >\sqrt{\kappa}.
\eeqm

\item  Case 2. When $\kappa\equiv\frac{\kappa_1}{\kappa_2}<\sqrt{5}-2$,  the solution is
\beqm \label{eq:case2_dec16}
\gamma =  \sqrt{\frac{4 \kappa^2- \kappa  (1-\kappa) \sqrt{2 (1- \kappa)}}{\kappa^2+4\kappa-1}}  >\sqrt{\kappa}.
\eeqm

\item   Case 3. When $\kappa\equiv\frac{\kappa_1}{\kappa_2}=\sqrt{5}-2$,  solving Eq. \eqref{eq:phys_real} directly we get 
\beqm
\gamma = \frac{1}{2} \sqrt{\frac{13 \sqrt{5}-29}{9-4 \sqrt{5}}}\approx 0.555893> \sqrt{\kappa}.
\eeqm
\eei 

In summary, the physical realizability conditions are always satisfied, and thus no additional quantum vacuum noise needs to be added.   The cost is that we have a higher disturbance attention level $\gamma$ than that given in Eq. \eqref{eq:gamma_cavity}.  This reveals the trade-off between physical realizability and $H^\infty$ disturbance attenuation performance for quantum linear systems.

\subsection{Example 2: DPA}\label{sec:dpa}


A simple example of a non-passive quantum linear system is a DPA, commonly used in quantum optics. A model of a DPA in the real-quadrature operator representation is,   \cite[Eq. (16)]{BZL12}, \cite{SYMF16}, \cite[Eq. (61)]{LDP+22}, \cite[Eqs. (3)-(4)]{LDP+22b}
\begin{equation}\label{eq:DPA_Sept6}
	\begin{aligned}
		\dot{\mbf{x}}=&-\frac{1}{2}\left[
		\begin{array}{cc}
			\kappa_w+\kappa_u-\epsilon  & 0 \\
			0 & \kappa_w+\kappa_u+\epsilon
		\end{array}
		\right] \mbf{x} \\
		&-\frac1{\gamma}\sqrt{\kappa_w}\ \mbf{w}_1-\sqrt{\kappa_u}\ \mbf{u}, \\
		\mbf{z} =&\; \sqrt{\kappa_u} x + \mbf{u},\\
		\mbf{y}=& \sqrt{\kappa_w}\ \mbf{x} +\frac1{\gamma} \mbf{w}_1.
\end{aligned}\end{equation}
For this system,  $\Omega_-=0$, $\Omega_+=\frac{\imath \epsilon}{2}$, $C_-=\sqrt{\kappa_w}$, and $C_+=0$. The parameter $\epsilon$ in $\Omega_+$ designates the strength of the pump field on the DPA.   It is assumed that $\epsilon<\kappa_w+\kappa_u$ to ensure the stability of the DPA. 
By Eq. \eqref{eq:A_x_A_y_dec13},	 
\beqm
\begin{aligned}
	\mathbbm{A}_x =&\; \left[
	\bey{cc}
	\frac{\epsilon-\kappa_w+\kappa_u}{2} & 0\\
	0 & -\frac{\epsilon+\kappa_w-\kappa_u}{2}
	\eey
	\right]
	=  \left[
	\bey{cc}
	\mathbbm{A}_{x1} & 0 \\
	0& \mathbbm{A}_{x3}
	\eey
	\right],\\
	\mathbbm{A}_y =&\;   \left[
	\bey{cc}
	\frac{\epsilon+\kappa_w-\kappa_u}{2} & 0\\
	0 & -\frac{\epsilon-\kappa_w+\kappa_u}{2}
	\eey
	\right]	 = \left[
	\bey{cc}
	\mathbbm{A}_{y1} & 0 \\
	0& \mathbbm{A}_{y3}
	\eey
	\right],
	\end{aligned}
\eeqm
which are consistent with Eq. \eqref{eq:oct27_A_x_y}.  As $\mathbbm{A}_{x2}=0$,  Eq. \eqref{eq:dec17:12} vanishes. Moreover, as $\mathbbm{A}_x$ and $A_y$ are already diagonal, $\mathbbm{W}=I$ in  Eq. \eqref{eq:tilde_A}. As a result, $\mathbbm{Z}=I$ in Eq. \eqref{eq:Z_dec23}. Corollary \ref{cor:iff} applies.

We assume $\kappa_w<\kappa_u$. In this case, $\mathbbm{A}_{x1}>0$ and hence $\mathbbm{A}_{y3}<0$.  We also assume that $\kappa_u\neq \epsilon +\kappa_w$; otherwise  $\mathbbm{A}_{x3}=0$ and assumptions \textbf{A3} and \textbf{A4} do not hold. In what follows, we study two cases.

\subsubsection{The case $\kappa_u>\epsilon +\kappa_w$ } \label{subsec:case 1}

In this case, $\mathbbm{A}_{x3}>0$ and $\mathbbm{A}_{y1}<0$.  Consequently the matrix $\mathbbm{A}_x$ is anti-stable and $\mathbbm{A}_y$ is stable.  Thus, the trivial solution $\mathbbm{Y}=0$ is a stabilizing solution  to Eq. \eqref{eq:Y_0}. Next, we derive $\mathbbm{X}$ that solves  Eq. \eqref{eq:X_0}.  Notice that Eq. \eqref{eq:X_0} is equivalent to
\beqm
-\mathbbm{A}_x^\top \mathbbm{X} - \mathbbm{X} \mathbbm{A}_x - \mathbbm{X}(\frac1{\gamma^2}  \mathbbm{B}_1 \mathbbm{B}_1^\top -\mathbbm{B}_2 \mathbbm{B}_2^\top)\mathbbm{X}=0,
\eeqm
whose solution is 
\beqm
\mathbbm{X} =\frac{\kappa_u-\kappa_w}{\gamma^2} \left[
\bey{cc}
\frac1{\kappa_u+\epsilon-\kappa_w} & 0 \\
0 & \frac1{\kappa_u-\epsilon-\kappa_w} 
\eey
\right].
\eeqm
According to Eq. \eqref{eq:controller_ABC_real}, the controller parameters   are given in Eq. \eqref{eq:controller_ABC_real_DPA_case_1}.
\begin{figure*}
\beq\label{eq:controller_ABC_real_DPA_case_1}
	\begin{aligned}
	\mathbbm{A}_K =&\; \mathbbm{A}+\mathbbm{B}_2\mathbbm{C}_K -\mathbbm{B}_K \mathbbm{C}_2  \\
	=&\;  \left[
	\begin{array}{cc}
		\frac{\gamma^2(\kappa_u^2+\epsilon^2-\kappa_w^2+2\kappa_u\epsilon)+ 2\kappa_u(\kappa_w-\kappa_u)}{2 \gamma ^2 \left(\kappa _u-\kappa _w+\epsilon \right)} & 0 \\
		0 & -\frac{\gamma^2(\kappa_u^2+\epsilon^2-\kappa_w^2-2\kappa_u\epsilon)+ 2\kappa_u(\kappa_w-\kappa_u)}{2 \gamma ^2 \left(\kappa _w-\kappa _u+\epsilon \right)} \\
	\end{array}
	\right], \\
	\mathbbm{B}_K  =&\; \mathbbm{B}_1 = -\sqrt{\kappa_w}I_2,\\
	\mathbbm{C}_K =&\;\sqrt{\kappa _u}  \left[
	\begin{array}{cc}
		 \left(\frac{\kappa _u-\kappa _w}{\gamma ^2 \left(\kappa _u-\kappa _w+\epsilon \right)}-1\right) & 0 \\
		0 & \left(\frac{\kappa _w-\kappa _u}{\gamma ^2 \left(\kappa _w-\kappa _u+\epsilon \right)}-1\right) \\
	\end{array}
	\right],\\
	\tilde{B}_K =&\;  C_K^\sharp, \quad \tilde{C}_K = -B_K^\sharp.
	\end{aligned}
\eeq
\end{figure*}
The physical realizability condition \eqref{eq:PR_real} turns out to be
\begin{align}
	&\gamma ^4 \left[-2 \kappa _u^3+2 \kappa _u^2 \kappa _w+2 \kappa _u \kappa _w^2+2 \epsilon ^2 \kappa _u+2 \kappa _w \left(\epsilon ^2-\kappa _w^2\right)\right]
	\nonumber\\
	&+\gamma ^2 \left(4 \kappa _u^3-8 \kappa _u^2 \kappa _w+4 \kappa _u \kappa _w^2\right)-\kappa _u^3-\kappa _u \kappa _w^2+2 \kappa _u^2 \kappa _w
	\nonumber\\
	=&\; 0.
\end{align}
Solving for $\gamma$ we get two positive solutions, as shown in Eq. \eqref{eq:apr10_temp2}.
\begin{figure*}
\beq \label{eq:apr10_temp2}
\gamma_\pm=\frac{ 2 \kappa _u \left(\kappa _u-\kappa _w\right)^2\pm (\kappa_u-\kappa_w)\sqrt{2\kappa _u  \left[\kappa _u^3-3 \kappa _u^2 \kappa _w+\kappa _u \left(3 \kappa _w^2+\epsilon ^2\right)-\kappa _w^3+\epsilon ^2 \kappa _w\right]}}{2 \left(\kappa _u+\kappa _w\right) \left(\kappa _u-\kappa _w+\epsilon \right) \left(\kappa _u-\kappa _w-\epsilon \right)}.
\eeq
\end{figure*}
Clearly, $\gamma_+\geq \gamma_-$.   Therefore, the controller with parameters in Eq. \eqref{eq:controller_ABC_real_DPA_case_1} is  physically realizable and the minimal closed-loop disturbance attenuation level is  $\gamma_-$. 
%

\subsubsection{The case $\kappa_u<\epsilon +\kappa_w$}

In this case, $\mathbbm{A}_{x3}<0$, and $\mathbbm{A}_{y1}>0$. Setting $\mathbbm{W} = \left[ 
\bey{cc}
0 &1\\
1  &0
\eey
\right]$. Thus $\mathbbm{Z} =I$ in Eq. \eqref{eq:Z_dec23}. Accordingly, Eq. \eqref{eq:XY_dec23_temp2} becomes
\beq \label{eq:XY_dec23_temp2b}
\rho(\mathbbm{X}\mathbbm{Y} ) <1   \Longleftrightarrow 
\rho\left(  \left(\mathbbm{S}-
\frac{\mathbbm{T}}{\gamma^2}
\right)^{-1}    \left(\mathbbm{U}-
\frac{\mathbbm{V}}{\gamma^2}
\right)^{-1}
\right)<\gamma^2.
\eeq
Following the development in  Section \ref{sec:real}, we have 
\beq \label{eq:STUV}
\begin{aligned}
\mathbbm{S} = &\; \frac{\kappa_u}{\epsilon+\kappa_u-\kappa_w} ,\quad \mathbbm{T} =\frac{\kappa_w}{\epsilon+ \kappa_u-\kappa_w}, \\
\mathbbm{U} =&\;  \frac{\kappa_w}{\epsilon+\kappa_w-\kappa_u},\quad \mathbbm{V} =\frac{\kappa_u}{\epsilon+\kappa_w-\kappa_u}
\end{aligned}
\eeq
Noticing
\beqm
\mathbbm{S} -\frac{\mathbbm{T}}{\gamma^2} >0 \Longleftrightarrow   \gamma>\sqrt{ \frac{\kappa_w}{\kappa_u}},
\eeqm
\beqm
\mathbbm{U}-\frac{\mathbbm{V}}{\gamma^2} >0 \Longleftrightarrow   \gamma>\sqrt{ \frac{\kappa_u}{\kappa_w}},
\eeqm
we have 
\beqm
\gamma>\max \left\{\sqrt{ \frac{\kappa_w}{\kappa_u}}, \sqrt{ \frac{\kappa_u}{\kappa_w}}\right\}>1.
\eeqm
Therefore,
\beqm\label{eq:XY_DPA_case_2}
\mathbbm{X}  = \left[
\bey{cc}
\left(\mathbbm{S}-
\frac{\mathbbm{T}}{\gamma^2}
\right)^{-1} & 0 \\
0 & 0
\eey
\right], \ \  
\mathbbm{Y}  =   \frac1{\gamma^2}\left[
\bey{cc}
\left(\mathbbm{U}-
\frac{\mathbbm{V}}{\gamma^2}
\right)^{-1} & 0 \\
0 &  0
\eey
\right]
\eeqm
solve Eq. \eqref{eq:ARE_0}.  Substituting matrices $\mathbbm{S},\mathbbm{T},\mathbbm{U},\mathbbm{V}$ in Eq. \eqref{eq:STUV} into  Eq. \eqref{eq:XY_dec23_temp2b} yields
\beqnm
&&\rho \left(  
\left(\mathbbm{S}-\frac{\mathbbm{T}}{\gamma^2}\right)^{-1} \left(\mathbbm{U}-\frac{\mathbbm{V}}{\gamma^2}\right)^{-1}
\right ) <1 \\
&\Leftrightarrow&
\gamma> \frac1{\sqrt{\frac1{2} (\frac{\kappa_u}{\kappa_w} + \frac{\kappa_w} {\kappa_u})-\frac1{2}\sqrt{(\frac{\kappa_u}{\kappa_w}-\frac{\kappa_w}{\kappa_u})^2+4[\frac{\epsilon^2-(\kappa_w-\kappa_u)^2}{\kappa_u \kappa_w}] }}}.
\eeqnm

After getting $\mathbbm{X}$ and $\mathbbm{Y}$,  the quantum controller $K$ can be constructed with parameters in Eq. \eqref{eq:controller_ABC_real}. 
%
The physical realizability condition \eqref{eq:PR_real} turns out to be Eq. \eqref{eq:feb20_dpa_2}.
\begin{figure*}
\begin{align}
	&\gamma ^6 \left(2 \kappa _u^3 \kappa _w+2 \kappa _u^2 \kappa _w^2-4 \epsilon  \kappa _u^2 \kappa _w\right)+\gamma ^4 \left(-\kappa _u \kappa _w^3-8 \kappa _u^2 \kappa _w^2-3 \kappa _u^3 \kappa _w-\epsilon ^2 \kappa _u \kappa _w+4 \epsilon  \kappa _u \kappa _w^2+4 \epsilon  \kappa _u^2 \kappa _w+2 \epsilon ^3 \kappa _u\right) \nonumber\\
	&\;+\gamma^2  \left(3 \kappa _u \kappa _w^3+8 \kappa _u^2 \kappa _w^2+\kappa _u^3 \kappa _w+\epsilon ^2 \kappa _u \kappa _w-2 \epsilon  \kappa _u \kappa _w^2-\epsilon  \kappa _u^2 \kappa _w-\epsilon ^3 \kappa _w-\epsilon  \kappa _w^3\right)-2 \kappa _u \kappa _w^3-2 \kappa _u^2 \kappa _w^2 \nonumber\\
	=&\; 0. \label{eq:feb20_dpa_2}
\end{align}
\end{figure*}
Solving Eq. \eqref{eq:feb20_dpa_2} indeed gives us an analytical expression of  $\gamma^2$, which is unfortunately too complicated to enable us to check if it is indeed positive given that $\kappa_u<\epsilon +\kappa_w$, without giving numerical values to the parameters. If Eq. \eqref{eq:feb20_dpa_2} indeed has a positive solution $\gamma_+$ satisfying $\gamma_+\geq \max \left\{\sqrt{ \frac{\kappa_w}{\kappa_u}}, \sqrt{ \frac{\kappa_u}{\kappa_w}}\right\}$, then the above controller $K$ is indeed physically realizable  and  then the attenuation level is  $\gamma_+$. Otherwise, additional vacuum inputs should be added to make the controller $K$ quantum-mechanical.

\section{Conclusions}\label{sec:conclu}
In this paper, we investigated the coherent feedback $H^\infty$ control problem for linear quantum systems. Our study reveals that the distinctive properties of this type of quantum systems allow the controller synthesis to be reformulated; specifically, solving at most four Lyapunov equations is sufficient for quantum controller construction, in contrast to the standard requirement of solving two coupled algebraic Riccati equations. The presented results facilitate subsequent studies of coherent feedback $H^\infty$ control in various quantum linear systems, such as those found in quantum optics, superconducting circuits and optomechanics.

\bibliographystyle{ieeetr}
\bibliography{gzhang}

@book{BR04,
  title={A Guide to Experiments in Quantum Optics},
  author={Bachor, Hans-Albert and Ralph, Timothy C},
  year={2004},
  publisher={Wiley}
}

@book{GZ00,
 title={Quantum Noise},
  author={Gardiner, Crispin and Zoller, Peter},
  year={2004},
  publisher={Springer}
}

@article{GJ09,
  title={The series product and its application to quantum feedforward and feedback networks},
  author={Gough, John and James, Matthew R},
  journal={IEEE Transactions on Automatic Control},
  volume={54},
  number={11},
  pages={2530--2544},
  year={2009},
  publisher={IEEE}
}

@article{GJ09b,
  title={Quantum feedback networks: Hamiltonian formulation},
  author={Gough, John and James, Matthew R},
  journal={Communications in Mathematical Physics},
  volume={287},
  number={3},
  pages={1109--1132},
  year={2009},
  publisher={Springer}
}

@article{GJN10,
  title={Squeezing components in linear quantum feedback networks},
  author={Gough, John Edward and James, M. R. and Nurdin, H. I.},
  journal={Physical Review A},
  volume={81},
  number={2},
  pages={023804},
  year={2010},
  publisher={APS}
}

@article{GZ15,
	author={J. E. Gough and G. Zhang},
	title={On realization theory of quantum linear systems},
    journal={Automatica},
	volume={59},
    pages={139-151},
    year={2015},
}

@book{WM08,
  title={Quantum {O}ptics},
  author={Walls, Daniel F and Milburn, Gerard J},
  year={2007},
  publisher={Springer Science \& Business Media}
}

@article{ZJ11,
author={G. Zhang and M. R. James},
title={Direct and indirect couplings in coherent feedback control of linear quantum systems},
    journal={IEEE Transactions on Automatic Control},
volume={56},
    pages={1535-1550},
    year={2011},
}

@article{ZJ12,
  title={Quantum feedback networks and control: a brief survey},
  author={Zhang, GuoFeng and James, Matthew R},
  journal={Chinese Science Bulletin},
  volume={57},
  number={18},
  pages={2200--2214},
  year={2012},
  publisher={Springer}
}

@article{ZLH+12,
  title={Coherent feedback control of linear quantum optical systems via squeezing and phase shift},
  author={Zhang, Guofeng and Joseph Lee, Heung Wing and Huang, Bo and Zhang, Hu},
  journal={SIAM Journal on Control and Optimization},
  volume={50},
  number={4},
  pages={2130--2150},
  year={2012},
  publisher={SIAM}
}

@article{ZJ13,
  title={On the response of quantum linear systems to single photon input fields},
  author={Zhang, Guofeng and James, Matthew R},
  journal={IEEE Transactions on Automatic Control},
  volume={58},
  number={5},
  pages={1221--1235},
  year={2013},
  publisher={IEEE}
}

@article{ZGPG18,
  title={The {K}alman decomposition for linear quantum systems},
  author={Zhang, Guofeng and Grivopoulos, Symeon and Petersen, Ian R and Gough, John E},
  journal={IEEE Transactions on Automatic Control},
  volume={63},
  number={2},
  pages={331--346},
  year={2018},
  publisher={IEEE}
}

@article{ZLW+17,
  title={Quantum feedback: theory, experiments, and applications},
  author={Zhang, Jing and Liu, Yu-xi and Wu, Re-Bing and Jacobs, Kurt and Nori, Franco},
  journal={Physics Reports},
  volume={679},
  pages={1--60},
  year={2017},
  publisher={Elsevier}
}

@article{JNP08,
  title={${H}^\infty$ control of linear quantum stochastic systems},
  author={James, Matthew R and Nurdin, Hendra I and Petersen, Ian R},
  journal={IEEE Transactions on Automatic Control},
  volume={53},
  number={8},
  pages={1787--1803},
  year={2008},
  publisher={IEEE}
}

@article{NJD09,
  title={Network synthesis of linear dynamical quantum stochastic systems},
  author={Nurdin, Hendra I and James, Matthew R and Doherty, Andrew C},
  journal={SIAM Journal on Control and Optimization},
  volume={48},
  number={4},
  pages={2686--2718},
  year={2009},
  publisher={SIAM}
}

@article{HP84,
  title={Quantum {Ito's} formula and stochastic evolutions},
  author={Hudson, Robin L and Parthasarathy, Kalyanapuram R},
  journal={Communications in Mathematical Physics},
  volume={93},
  number={3},
  pages={301--323},
  year={1984},
  publisher={Springer}
}

@book{NY17,
  title={Linear Dynamical Quantum Systems - Analysis, Synthesis, and Control},
  author={Nurdin, Hendra I and Yamamoto, Naoki},
  year={2017},
  publisher={Springer-Verlag Berlin}
}

@article{NJP09,
  title={Coherent quantum {LQG} control},
  author={Nurdin, Hendra I and James, Matthew R and Petersen, Ian R},
  journal={Automatica},
  volume={45},
  number={8},
  pages={1837--1846},
  year={2009},
  publisher={Elsevier}
}

@book{ZDG96,
  title={Robust and Optimal Control},
  author={Zhou, Kemin and Doyle, J. C. and Glover, K.},
  year={1996},
  publisher={Prentice Hall, Englewood Cliffs, New Jersey}
}

@article{Mabuchi08,
  title={Coherent-feedback quantum control with a dynamic compensator},
  author={Mabuchi, Hideo},
  journal={Physical Review A},
  volume={78},
  number={3},
  pages={032323},
  year={2008},
  publisher={APS}
}

@book{WM10,
  title={Quantum Measurement and Control},
  author={Wiseman, H. M. and Milburn, G. J.},
  year={2010},
  publisher={Cambridge University Press}
}

@article{SY21,
  title={Quantum functionalities via feedback amplification},
  author={Shimazu, Rion and Yamamoto, Naoki},
  journal={Physical Review Applied},
  volume={15},
  number={4},
  pages={044006},
  year={2021},
  publisher={APS}
}

@ARTICLE{PDP17,
  title={Dark modes of quantum linear systems},
  author={Pan, Y. and Dong, D. and Petersen, I.R.},
  journal={IEEE Transactions on Automatic Control},
  volume={62},
  issue={8},
  pages={4180--4186},
  year={2017},
}

@ARTICLE{petersen11,
  title={Cascade cavity realization for a class of complex transfer functions arising in coherent quantum feedback control},
  author={Petersen, I. R.},
  journal={Automatica},
  volume={47},
  issue={8},
  pages={1757--1763},
  year={2011},
}

@ARTICLE{XPD17,
  title={Coherent robust {$H^\infty$} control of linear quantum systems with uncertainties in the {H}amiltonian and coupling operators},
  author={Xiang, C. and Petersen, I. R. and Dong, D.},
  journal={Automatica},
  volume={81},
  pages={8--21},
  year={2017},
}

@article{ZD22,
title = {Linear quantum systems: a tutorial},
journal = {Annual Reviews in Control},
volume = {54},
pages = {274-294},
year = {2022},
issn = {1367-5788},
author = {Guofeng Zhang and Zhiyuan Dong},
}

@book{GvL13,
  title={Matrix Computations, 4th ed.},
  author={Golub, G and Van Loan, C},
  year={2013},
  publisher={Johns Hopkins University Press, Baltimore}
}

@article{Nurdin2010b,
  title={On synthesis of linear quantum stochastic systems by pure cascading},
  author={Nurdin, Hendra Ishwara},
  journal={IEEE Transactions on Automatic Control},
  volume={55},
  number={10},
  pages={2439--2444},
  year={2010},
  publisher={IEEE}
}

@article{BZL12,
author = {Chuanxin Bian and Guofeng Zhang and Heung Wing Joseph Lee},
title = {Squeezing enhancement of degenerate parametric amplifier via coherent feedback control},
journal = {International Journal of Control},
volume = {85},
number = {12},
pages = {1865--1875},
year = {2012},
publisher = {Taylor \& Francis},
}

@article{BQD24,
title = {Stabilizing preparation of quantum Gaussian states via continuous measurement},
journal = {Automatica},
volume = {164},
pages = {111622},
year = {2024},
issn = {0005-1098},
doi = {https://doi.org/10.1016/j.automatica.2024.111622},
author = {Liying Bao and Bo Qi and Daoyi Dong},
}

@article{BQND24,
  title = {Exponential sensitivity revival of noisy non-{H}ermitian quantum sensing with two-photon drives},
  author = {Bao, Liying and Qi, Bo and Nori, Franco and Dong, Daoyi},
  journal = {Physical Review Research},
  volume = {6},
  issue = {2},
  pages = {023216},
  numpages = {18},
  year = {2024},
  month = {May},
  publisher = {American Physical Society},
  doi = {10.1103/PhysRevResearch.6.023216},
}

@article{PAJ91,
	author = {Petersen, Ian R. and Anderson, Brian D. O. and Jonckheere, Edmond A.},
	title = {A first principles solution to the non-singular {$H^\infty$} control problem},
	journal = {International Journal of Robust and Nonlinear Control},
	volume = {1},
	number = {3},
	pages = {171-185},
	doi = {https://doi.org/10.1002/rnc.4590010304},
	url = {https://onlinelibrary.wiley.com/doi/abs/10.1002/rnc.4590010304},
	eprint = {https://onlinelibrary.wiley.com/doi/pdf/10.1002/rnc.4590010304},
	year = {1991}
}

@ARTICLE{P89,
	author={Petersen, Ian R.},
	journal={IEEE Transactions on Automatic Control}, 
	title={Complete results for a class of state feedback disturbance attenuation problems}, 
	year={1989},
	volume={34},
	number={11},
	pages={1196-1199},
	doi={10.1109/9.40752}}

@article{leonhardt2003,
	title={Explicit effective Hamiltonians for general linear quantum-optical networks},
	author={Leonhardt, Ulf and Neumaier, Arnold},
	journal={Journal of Optics B: Quantum and Semiclassical Optics},
	volume={6},
	number={1},
	pages={L1},
	year={2003},
	publisher={IOP Publishing}
}

@article{DZLP26,
		author={Dong, Zhiyuan and Zhang, Guofeng and Lee, Heung-wing Joseph and Petersen, Ian R.},
		journal={IEEE Transactions on Automatic Control}, 
		title={Linear quantum systems: poles, zeros, invertibility and sensitivity}, 
		year={2026},
		volume={},
		number={},
		pages={1-16},
		doi={10.1109/TAC.2026.3680121}}

@article{MP11b,
	author = {Maalouf, Aline and Petersen, Ian},
	year = {2011},
	month = {03},
	pages = {309 - 319},
	title = {Coherent {$H^{\infty }$} Control for a Class of Annihilation Operator Linear Quantum Systems},
	volume = {56},
	journal = {IEEE Transactions on Automatic Control},
	doi = {10.1109/TAC.2010.2052942}
}

@article{LDP+22,
	title={Fault-Tolerant Coherent {$H^\infty$} Control for Linear Quantum Systems},
	author={Liu, Yanan and Dong, Daoyi and Petersen, Ian R and Gao, Qing and Ding, Steven X and Yokoyama, Shota and Yonezawa, Hidehiro},
	journal={IEEE Transactions on Automatic Control},
	volume={67},
	number={10},
	pages={5087--5101},
	year={2022},
	publisher={IEEE}
}

@article{LDP+22b,
	title={Fault-tolerant {$H^\infty$} control for optical parametric oscillators with pumping fluctuations},
	author={Liu, Yanan and Dong, Daoyi and Petersen, Ian R and Yonezawa, Hidehiro},
	journal={Automatica},
	volume={140},
	pages={110236},
	year={2022},
	publisher={Elsevier}
}

@ARTICLE{P13,
	author={Petersen, Ian R.},
	journal={IEEE Transactions on Automatic Control}, 
	title={Singular Perturbation Approximations for a Class of Linear Quantum Systems}, 
	year={2013},
	volume={58},
	number={1},
	pages={193-198},
	doi={10.1109/TAC.2012.2203030}}

@article{CDZL17,
	title={Mixed {LQG} and {$H^\infty$} coherent feedback control for linear quantum systems},
	author={Cui, Lei and Dong, Zhiyuan and Zhang, Guofeng and Lee, Heung Wing Joseph},
	journal={International Journal of Control},
	volume={90},
	number={12},
	pages={2575--2588},
	year={2017},
	publisher={Taylor \& Francis}
}

@article{MP12,
	title={Time-varying {$H^\infty$} control for a class of linear quantum systems: a dynamic game approach},
	author={Maalouf, Aline I and Petersen, Ian R},
	journal={Automatica},
	volume={48},
	number={11},
	pages={2908--2916},
	year={2012},
	publisher={Elsevier}
}

@article{nurdin2010synthesis,
	title={Synthesis of linear quantum stochastic systems via quantum feedback networks},
	author={Nurdin, Hendra I},
	journal={IEEE Transactions on Automatic Control},
	volume={55},
	number={4},
	pages={1008--1013},
	year={2010},
	publisher={IEEE}
}

@article{SYMF16,
	title={Creation and measurement of broadband squeezed vacuum from a ring optical parametric oscillator},
	author={Serikawa, Takahiro and Yoshikawa, Jun-ichi and Makino, Kenzo and Furusawa, Akira},
	journal={Optics Express},
	volume={24},
	number={25},
	pages={28383--28391},
	year={2016},
	publisher={Optical Society of America}
}

@article{VP16,
	title={Quantum noises, physical realizability and coherent quantum feedback control},
	author={Vuglar, Shanon L and Petersen, Ian R},
	journal={IEEE Transactions on Automatic Control},
	volume={62},
	number={2},
	pages={998--1003},
	year={2017},
	publisher={IEEE}
}

@ARTICLE{YK03b,
	author={Yanagisawa, M. and Kimura, H.},
	journal={IEEE Transactions on Automatic Control}, 
	title={Transfer function approach to quantum Control-Part II: Control concepts and applications}, 
	year={2003},
	volume={48},
	number={12},
	pages={2121-2132},
	doi={10.1109/TAC.2003.820065}}

@article{UJ24,
	title = {Design of coherent passive quantum equalizers using robust control theory},
	journal = {Automatica},
	volume = {163},
	pages = {111599},
	year = {2024},
	issn = {0005-1098},
	doi = {https://doi.org/10.1016/j.automatica.2024.111599},
	url = {https://www.sciencedirect.com/science/article/pii/S000510982400092X},
	author = {Valery Ugrinovskii and Matthew R. James}
}

@article{XLDPU26,
	title={Two-disk synthesis of coherent quantum passive equalizers for channels with Hamiltonian uncertainty},
	author={Xiao, Shuixin and Liang, Weichao and Dong, Daoyi and Petersen, Ian R and Ugrinovskii, Valery},
	journal={IEEE Transactions on Automatic Control},
	year={2026},
	publisher={IEEE}
}

@book{DP23,
	title={Learning and {R}obust {C}ontrol in {Q}uantum {T}echnology},
	author={Dong, Daoyi and Petersen, Ian R},
	year={2023},
	publisher={Springer}
}

@book{BK95,
	title={Quantum {M}easurement},
	author={Braginsky, Vladimir B and Khalili, Farid Ya},
	year={1995},
	publisher={Cambridge University Press}
}

\end{document}